\renewcommand\thefootnote{\@fnsymbol\c@footnote}}
\renewcommand\thefootnote{\arabic{footnote}}}
\newcounter{affPavia}
\newcounter{affINFN}
\newcounter{affINDAM}
\numberwithin{equation}{section}
\newtheorem{theorem}{Theorem}[section]
\newtheorem{proposition}[theorem]{Proposition}
\newtheorem{corollary}[theorem]{Corollary}
\theoremstyle{definition}
\newtheorem{definition}[theorem]{Definition}
\theoremstyle{remark}
\newtheorem{remark}[theorem]{Remark}
\newcommand{\dr}{\mathrm{d}}
\newcommand{\dist}{\operatorname{dist}}
\newcommand{\Ric}{\operatorname{Ric}}
\newcommand{\Riem}{\operatorname{Riem}}
\newcommand{\Sc}{\operatorname{Sc}}
\newcommand{\prin}{\mathrm{prin}}
\newcommand{\sub}{\mathrm{sub}}
\newcommand{\loc}{\mathrm{loc}}
\definecolor{amber}{rgb}{1.0, 0.49, 0.0}
\renewcommand{\tilde}{\widetilde}
\renewcommand*{\backrefalt}[4]{%
\ifcase #1 %
No citations%
\or
\ding{43}~p.~#2%
\else
\ding{43}~pp.~#2%
\fi}
\begin{document}

\title{Spectral asymmetry via pseudodifferential projections:\\ the massless Dirac operator}
\author{Matteo Capoferri\,\orcidlink{0000-0001-6226-1407}%
\thanks{Dipartimento di Matematica ``Federigo Enriques'',
Universit\`a degli Studi di Milano,
Via C.~Saldini 50,
20133 Milano,
Italy \emph{and}
Maxwell Institute for Mathematical Sciences, Edinburgh \&
Department of Mathematics,
Heriot-Watt University,
Edinburgh EH14 4AS,
UK;
\texttt{matteo.capoferri@unimi.it},
\url{https://mcapoferri.com}.}
\and
Beatrice Costeri%
\thanks{Dipartimento di Fisica ``Alessandro Volta'',
Universit\`a degli Studi di Pavia,
Via Bassi 6,
I-27100 Pavia,
Italia;
\texttt{beatrice.costeri01@universitadipavia.it},
\texttt{claudio.dappiaggi@unipv.it}.}\,
\setcounter{affPavia}{\value{footnote}}%
\thanks{INFN, Sezione di Pavia,
Via Bassi 6,
I-27100 Pavia,
Italia.}\,
\setcounter{affINFN}{\value{footnote}}%
\thanks{INdAM, Sezione di Pavia,
Via Ferrata 5,
I-27100 Pavia,
Italia.}%
\setcounter{affINDAM}{\value{footnote}}%
\and
Claudio Dappiaggi\,\orcidlink{0000-0002-3315-1273}%
\footnotemark[\value{affPavia}]\,
\footnotemark[\value{affINFN}]\,
\footnotemark[\value{affINDAM}]%
}

%%%% this removes footnote "crosses" by the authors' names
%\renewcommand\footnotemark{}
%%%%

%\date{version 2; 7 May 2024; \LaTeX{ed} \today}
\date{}

\maketitle

\vspace{-.6cm}

\begin{abstract}
A new approach to the study of spectral asymmetry for systems of partial differential equations (PDEs) on closed manifolds was proposed in a recent series of papers by the first author and collaborator. They showed that information on spectral asymmetry can be encoded within and recovered from  a negative order pseudodifferential operator --- the asymmetry operator --- constructed from appropriately defined pseudodifferential (spectral) projections. 
In this manuscript we apply these techniques to the study of the massless Dirac operator; in particular, we compute the principal symbol of the asymmetry operator, accounting for the underlying gauge invariance.

%In this manuscript we show that these techniques can be applied also to the massless Dirac operator and, in particular, we compute the principal symbol of the asymmetry operator working in the Levi-Civita framing since spectral invariants cannot depend from this choice.

\

{\bf Keywords:} Dirac operator, spectral asymmetry, eta invariant, pseudodifferential projections.

\

{\bf 2020 MSC classes: }
primary
58J50; %Spectral problems; spectral geometry; scattering theory on manifolds
secondary
35P20, %Asymptotic distributions of eigenvalues in context of PDEs
35Q41, %Time-dependent Schrödinger equations and Dirac equations
47B93, %Operators arising in mathematical physics
58J28, %Eta-invariants
58J40. %Pseudodifferential and Fourier integral operators on manifolds

\end{abstract}

\tableofcontents

\allowdisplaybreaks

\section{Statement of the problem and main results}
\label{Statement of the problem and main results}

Let $(M,g)$ be a closed, connected and oriented Riemannian 3-manifold. 
To fix our convention, let us specify that the Riemann curvature tensor $\Riem$ locally has components ${\Riem^\kappa}_{\lambda\mu\nu}$ defined in accordance with 
\[
{\Riem^\kappa}_{\lambda\mu\nu}:=
\dr x^\kappa(\Riem(\partial_\mu\,,\partial_\nu)\,\partial_\lambda)
=
\partial_\mu{\Gamma^\kappa}_{\nu\lambda}
-\partial_\nu{\Gamma^\kappa}_{\mu\lambda}
+{\Gamma^\kappa}_{\mu\eta}{\Gamma^\eta}_{\nu\lambda}
-{\Gamma^\kappa}_{\nu\eta}{\Gamma^\eta}_{\mu\lambda}\,,
\]
the $\Gamma$'s being Christoffel symbols. The Ricci tensor is defined as $\Ric_{\mu\nu}:=\Riem^\alpha{}_{\mu\alpha\nu}$ and $\Sc:=g^{\mu\nu}\Ric_{\mu\nu}$ is scalar curvature.

We denote by $\text{L}^2(M)$ the linear space of 2-columns of half-densities over $M$ --- that is, sections of the trivial $\mathbb{C}^2$-bundle over $M$ valued in half-densities --- and by $\text{H}^1(M)$ the Sobolev space consisting of 2-columns of half-densities $f \in \text{L}^2(M)$ with first order weak derivatives also in $\text{L}^2(M)$. 

For $m \in \mathbb{R}$, we define $\Psi^m(M)$ to be the \emph{space of classical pseudodifferential operators of order $m$} acting on two-columns of half-densities, and $S^m(M)$ to be the space of polyhomogeneous symbols of order $m$. %We refer the reader to~\cite{shubin} for background materials on pseudodifferential operators. 
In addition, following standard conventions, we denote by
\begin{equation}
	\label{smoothop}
	\Psi^{-\infty}(M) := \bigcap_{m \in \mathbb{R}} \Psi^m(M)
\end{equation}
the space of integral operators with infinitely smooth integral kernel.

Given a pseudodifferential operator $P \in \Psi^m(M)$, we shall denote by $P_{\text{prin}}$ its principal symbol and by $P_{\text{sub}}$ its subprincipal symbol \cite[Eqn. (5.2.8)]{DuHo}. When needed to avoid confusion, we denote by $(\,\cdot\,)_{\text{prin}, m}$ the principal symbol of the quantity within brackets, regarded as a pseudodifferential operator in $\Psi^{-m} (M)$. Furthermore, we shall denote by $p(x,\xi)$ (lowercase, roman) the full symbol of $P$, and by $\mathfrak{p}(x,y)$ (lowercase, fraktur) its Schwartz (integral) kernel. For further details and background materials on the spaces and properties of pseudodifferential operators we refer the reader to~\cite{shubin}.

\

As a consequence of our assumptions, $M$ is \emph{parallelisable} \cite{Stiefel}, namely there exist globally defined smooth vector fields $\{e_j\}_{j =1}^3 \in \Gamma(T M)$ such that, at each point $x \in M$, $\{e_j(x)\}_{j =1}^3$ is a basis of $T_x M$.
Working in a local coordinate patch, we denote by $e_j{}^{\alpha}$, $\alpha = 1,2,3$, the $\alpha$-th component of the $j$-th vector field, and by 
\begin{equation}
	\label{pauliproj}
	\sigma^{\alpha} (x) := s^j e_j{}^{\alpha}(x)
\end{equation}
the projections of the standard Pauli matrices $\{s^j\}_{j=1,2,3}$ along the chosen global framing. Here and further on we adopt Einstein's summation convention over repeated indices; we shall use Greek letters for holonomic (tensor) indices and Latin for anholonomic (frame) indices.

Our manifold $M$ admits a spin structure, see~\cite[Ex. 2.3]{lawson}, which in turn entails the existence of a  \emph{(massless) Dirac operator}.

\begin{definition}
\label{masslessdirop}
We call \emph{massless Dirac operator} the operator acting on $2$-columns of half-densities defined as
\begin{equation}
\label{dirop}
    W := -i \sigma^{\alpha} \left( \frac{\partial}{\partial x^{\alpha}} + \frac{1}{4} \sigma_{\beta} \left( \frac{\partial \sigma^{\beta}}{\partial x^{\alpha}} + \Gamma^{\beta}{}_{\alpha\gamma} \sigma^{\gamma} \right) - \frac{1}{2} \Gamma^{\beta}{}_{\alpha\beta}\right): H^1(M) \rightarrow \text{L}^2(M)\,.
\end{equation}
%where $\Gamma^{\alpha}{}_{\beta \gamma}$ are the Christoffel symbols of the Levi-Civita connection.
\end{definition}

\begin{remark}
The reader may be more accustomed with the ``classical'' massless Dirac operator $W_{\mathrm{scal}}$ acting on $2$-columns of scalar fields, \textit{i.e.}, \emph{Weyl spinors}. The latter can be obtained from $W$ by conjugation with an appropriate power of the Riemannian density $\rho(x) := \sqrt{\det g_{\mu\nu}(x)}$: 
\begin{equation}
    W_{\text{scal}}:= \rho^{\frac{1}{2}} W \rho^{-\frac{1}{2}}. 
\end{equation}
Let us also mention that the massless Dirac operator is sometimes referred to as \emph{Weyl operator} in the literature --- this justifies the notation $W$ in Definition~\ref{masslessdirop}. 
\end{remark}

\

The massless Dirac operator is a self-adjoint first order pseudodifferential operator whose principal symbol, a smooth matrix-function on $T^*M\setminus\{0\}$ positively homogeneous in momentum $\xi$ of degree $1$, reads
\begin{equation}
    \label{Wprin}
    W_{\prin}(x, \xi) = \sigma^{\alpha}(x) \,\xi_{\alpha}.
\end{equation}
The principal symbol $W_{\prin}$ has \textit{simple} eigenvalues
\begin{equation}
	h^{(\pm)} (x, \xi) = \pm\|\xi\| =: \pm h(x, \xi)\,,
\end{equation}
where $\|\xi\|^2 := g^{\mu\nu}(x) \,\xi_{\mu} \xi_{\nu}$ is the Riemannian norm squared of $\xi$.
Hence, the massless Dirac operator $W$ is elliptic, since the determinant of its principal symbol
\begin{equation}
	\det W_{\prin}(x, \xi) = - \|\xi\|^2
\end{equation}
is nowhere vanishing on the punctured cotangent bundle $T^*M\setminus \{0\}$.
This implies that the spectrum of $W$ is discrete and the operator is unbounded both from above and below, with eigenvalues accumulating to $+\infty$ and $-\infty$.

The remaining part of the full symbol of $W$, positively homogeneous in momentum $\xi$ of degree zero, reads
\begin{equation*}
    W_0 := -i \sigma^{\alpha} \left( \frac{1}{4} \sigma_{\beta} \left( \frac{\partial \sigma^{\beta}}{\partial x^{\alpha}} + \Gamma^{\beta}{}_{\alpha\gamma} \sigma^{\gamma} \right) - \frac{1}{2} \Gamma^{\beta}{}_{\alpha\beta}\right).
\end{equation*}

Since we are interested in the study of the spectral problem for the massless Dirac operator, inspired by \cite{part1} we introduce the following quantities.

\begin{definition}\label{Def: Spectral Projectors}
	Denoting by $\lambda_k$ the non vanishing eigenvalues of the massless Dirac operator $W$ (with account of their multiplicities) and by $u_k$ the corresponding eigenspinors, we define
	\begin{flalign}
		\label{P_+}
		P_+ &:= \sum_{\lambda_k > 0} \langle u_k, \cdot \rangle u_k, \\
        \label{P_-}
		P_- &:= \sum_{\lambda_k 
			< 0} \langle u_k, \cdot \rangle u_k,
	\end{flalign}
    to be the positive and negative spectral projections,
	where $\langle v , w \rangle:=\int_M \overline{v^T(x)}\,w(x) \,\mathrm{d}x$ denotes the natural inner product in $\text{L}^2(M)$, while the superscript $T$ stands for the transpose. Furthermore, we denote by
	\begin{equation}
		\label{P_0}
		P_0 := \sum_{j=1}^{N} \langle v_j, \cdot \rangle \, v_j,
	\end{equation}
the projection onto the kernel of $W$, where $N := \dim \ker W < +\infty$ is the multiplicity of the eigenvalue zero which is finite, because $W$ is elliptic.
\end{definition}

Observe that, for notational convenience, we distinguish the eigenspinors associated with the eigenvalue zero denoting them by $v_j$.
Furthermore $v_j \in C^{\infty}(M)$ for all $j\in\{1,\dots,N\}$, due to the ellipticity of $W$, which entails that $P_0$ is a smoothing operator. 

The operators $P_\pm$ and $P_0$ are pseudodifferential operators of order zero \cite{part1}, satisfying, for $\aleph,\beth\in\{+,-, 0\}$, 
\begin{flalign}
    P_{\aleph}^2 &= P_\aleph \\
    P_\aleph^* &= P_\aleph, \\
    P_\aleph P_\beth &= 0  \quad \text{for } \aleph \ne \beth \\
    P_++P_-  &= \text{Id} \mod \Psi^{-\infty}(M)\,,
\end{flalign}
where $\text{Id} \in \Psi^0(M)$ denotes the identity operator. 

\vskip .3cm

If $Q \in \Psi^m(M)$ is self-adjoint and $m < -3$ (recall that the dimension of our manifold is $3$), then $Q$ is of trace class and, in addition, $Q$ is an integral operator with continuous integral kernel \cite[Sec. 12.1]{shubin}. In other words, we can represent the action of $Q$ as 
\begin{equation}
\label{pstrace}
    Q: f_j(x) \mapsto \int_{M} \mathfrak{q}_j{}^{k}(x,y) f_k(y) \, \mathrm{d}y, \qquad j = 1, 2, 3,
\end{equation}
where $\mathrm{d}y =  \mathrm{d}y^1  \mathrm{d}y^2  \mathrm{d}y^3$. 

\

Our construction will rely on the notion of \emph{matrix trace} of a matrix pseudodifferential operator --- see also~\cite[Section~4]{curl}, \cite[Section~3]{conjectures}.

\begin{definition}\label{Def: Trace of a Pseudo}
	Given $Q\in\Psi^m(M)$, we call the \emph{matrix trace of the pseudodifferential operator $Q$} the scalar pseudodifferential operator
	\begin{equation}
		\label{matrixtr}
		\mathfrak{tr} \, Q : f(x) \mapsto \int_{M}  (\text{tr} \, \mathfrak{q})(x,y) f(y) \, \mathrm{d}y,
	\end{equation}
	where
	\begin{equation}
		\label{trp}
		(\text{tr} \, \mathfrak{q})(x, y) := \mathfrak{q}_j{}^{j}(x,y)\,.
	\end{equation}
\end{definition}

\begin{remark}
    Note that the operator defined by equation \eqref{matrixtr} is well-defined for \emph{any} $m \in \mathbb{R}$, not necessarily satisfying the condition $m < -3$. Indeed, when $m \ge -3$, the integral kernel $\mathfrak{q}_j{}^k(x, y)$ in equation~\eqref{pstrace} should be interpreted at a distributional level, \textit{i.e.}, as a Schwartz kernel. However, the operation of taking its matrix trace is still well-defined. 
\end{remark}

\noindent The matrix trace satisfies the following properties \cite[Section~4]{curl}:
\begin{itemize}
    \item[(i)] if $Q\in\Psi^m(M)$, then $\mathfrak{tr} \, Q$ is a scalar pseudodifferential operator of order $m$;
    \item[(ii)] 
       $ (\mathfrak{tr} \, Q)^* = \mathfrak{tr} (Q^*)$,
    where $Q^*$ the formal adjoint of $Q$ with respect to inner product in $L^2(M)$;
    \item[(iii)] 
       $ (\mathfrak{tr} \, Q)_{\text{prin}} = \text{tr} (Q_{\text{prin}})$;
    \item[(iv)] $(\mathfrak{tr} \, Q)_{\text{sub}} = \text{tr} (Q_{\text{sub}})$;
    \item[(v)] if $Q$ is of trace class, then
    \begin{equation}
\label{optrace}
    \text{Tr} \, Q = \text{Tr} (\mathfrak{tr} \,Q)\,,
\end{equation}
where $\operatorname{Tr}Q$ denotes the operator trace of $Q$.
\end{itemize}

The approach developed in \cite{curl,conjectures} is rooted in the the attempt to define spectral asymmetry ``directly'', as the trace of the difference of positive and negative spectral projections. The issue one encounters is that, clearly, $P_+-P_-$ is not of trace class; indeed, $+1$ and $-1$ are points of essential spectrum --- eigenvalues of infinite multiplicity. As an intermediate step, what one can do is to consider the matrix trace of $P_+-P_-$.

\begin{definition}
\label{asyop}
    We call \emph{asymmetry operator} the pseudodifferential operator 
    \begin{equation}
    \label{definition asymmetry operator equation}
        A := \mathfrak{tr} (P_+ - P_-)\,.
    \end{equation}
\end{definition}

\emph{Prima facie}, the operator $A$ is of order zero. It was shown in~\cite{curl}, for the particular case of the operator curl, that there are extensive cancellations taking place when taking the matrix trace, so that the resulting asymmetry operator is, in fact of order $-3$. Having an operator of negative order at one's disposal, one can hope to define geometric invariants capturing spectral asymmetry by a regularisation procedure. It was proved in \cite{conjectures} that the regularised trace of the asymmetry operator for curl yields precisely the classical eta invariant for curl.

The general strategy underpinning the above results is not specific to the operator curl, and one can deploy it more widely to study spectral asymmetry for non-semibounded systems of PDEs (or, more generally, pseoudodifferential matrix operators). There are several advantages to this approach. For instance, rather than `just' a number --- the eta invariant --- (or a function, if one considers the local version thereof), one obtains an invariantly defined operator, which encodes more information about spectral asymmetry and its relationship with the geometry of the underlying manifold. Furthermore, the above approach is `direct' and, in a sense, intuitive, thus potentially more accessible to the more applied side of the mathematical community.

In this paper, with also a mathematical physics readership in mind, we will show how one can apply the techniques developed in \cite{curl,conjectures} to the important and physically meaningful case of the massless Dirac operator. When compared to what happens for curl, on the one hand the massless Dirac operator is easier to deal with, because one does not need to work with integral kernels and symbols that are two-point tensors, alongside the mathematical hurdles that come with that: these quantities are now matrix functions valued in half-densities. On the other hand, one faces new technical challenges, such as the fact that one needs to account for gauge transformations connecting spectrally equivalent framings, as well as the fact that the asymmetry operator for massless Dirac is `just' of order $-1$ (as opposed to $-3$).

At the same time, from a physical viewpoint, studying the spectral properties of the Dirac operator on a three dimensional Riemannian manifold is of great relevance due to its several applications, especially in connection with the eta invariant (we shall elaborate further on this towards the end of this section). Listing them all goes beyond the scope of this work, but we feel it is worth emphasising at least two notable examples, which are of great relevance in high energy and condensed matter physics. As to the former field, the eta-invariant enters in the study of the path-integral formulation of a Chern--Simons theory \cite{Witten:1988hf}; as to the latter field, the last decade has witnessed an increasing interest in the analysis of topological phases of matter, see for example \cite{Witten:2015aoa} for a review. In this regard, Weyl Fermions and their spectral property become a crucial ingredient in many instances, such as the study of the quasiparticle excitations of the so-called Weyl semimetals.

\

The remainder of this section is devoted to the statements of our main results, postponing their proofs and further explanations until later sections. 

\

In order to obtain local and global geometric invariants from the asymmetry operator $A$, it is convenient to work with a distinguished choice of framing: the \emph{Levi-Civita framing} --- see also \cite[Section 7.1]{dirac}. %The reason for choosing the Levi-Civita framing stems from the properties of the Levi-Civita connection in relation to the chosen normal coordinate system, which allows us to massively simplify the computations. 

\begin{definition}
	\label{def: Levi-Civita framing}
	Given $z \in M$, let $\mathcal{U}_z$ be a geodesic neighbourhood of $z$. For $x \in \mathcal{U}_z$ and $j\in\{1,2,3\}$, let $\tilde{e}_j^{\mathrm{loc}}(x)$ be the parallel transport of $e_j(z)$ along the unique geodesic connecting $z$ to $x$. The \emph{Levi-Civita framing} generated by $\{e_j\}_{j=1}^3$ at $z$ is defined as the equivalence class of framings which coincide with $\{\tilde{e}_j^{\mathrm{loc}}\}_{j=1}^3$ in a geodesic neighborhood of $z$. 
\end{definition}

\noindent Let us denote by $\mathfrak{a}(x,y)$ the Schwartz kernel of $A$ and by
\[
a(x,\xi)\sim \sum_{j=0}^{+\infty} a_{-j}(x,\xi), \qquad a_{-j}(x,\lambda\xi)=\lambda^{-j} a_{-j}(x,\lambda\xi) \quad \forall\lambda>0,
\]
its left symbol. Here $\sim$ stands for asymptotic expansion ``in smoothness'' \cite[\S~3.3]{shubin}.

\

Let us fix a point $z\in M$ and let us denote by $\tilde W$ the Dirac operator associated with the Levi-Civita framing $\{\tilde e_j\}_{j=1}^3$ generated by $\{e_j\}_{j=1}^3$ at $z$. Let $\tilde{\mathfrak{a}}$ and $\tilde{a}$ be the Schwartz kernel and full left symbol, respectively, of the corresponding asymmetry operator.

\begin{theorem}
\label{proposition symbol of tilde A}
We have
\begin{equation}
\label{proposition symbol of tilde A equation 1}
\tilde{a}_j(z,\xi)=0 \qquad \text{for }\ j=0,-1,-2,
\end{equation}
and
\begin{equation}
\label{proposition symbol of tilde A equation 2}
\tilde A_\prin(z,\xi):=\tilde{a}_{-3}(z,\xi)=- \frac{1}{12} E^{\alpha \, \beta \, \gamma} (z) \nabla_{\alpha} \operatorname{Ric}_{\beta}{}^{\rho} (z) \frac{\xi_{\gamma} \xi_{\rho}}{|\xi|^5}\,,
\end{equation}
where 
\begin{equation}
\label{Eabc}
E_{\alpha\beta\gamma}(x):=\rho(x)\,\varepsilon_{\alpha\beta\gamma}
\end{equation}
is the Levi-Civita tensor, $\rho$ the Riemannian density, while $\varepsilon_{\alpha \beta \gamma}$ is the totally antisymmetric symbol, $\varepsilon_{123}=+1$.
\end{theorem}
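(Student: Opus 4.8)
The plan is to compute the symbol of $\tilde A = \mathfrak{tr}(\tilde P_+ - \tilde P_-)$ order by order, exploiting the fact that the matrix trace kills the terms that would otherwise make $A$ an operator of order $0$. First I would invoke the explicit asymptotic construction of the pseudodifferential spectral projections from \cite{part1,curl}: one has $\tilde P_\pm = \frac12(\mathrm{Id} \pm \tilde W |\tilde W|^{-1})$ modulo smoothing, and the full symbol of $\tilde P_\pm$ is obtained algorithmically from the symbol of $\tilde W$ via the standard recursion for functions of an elliptic operator (or, equivalently, via the Cauchy-integral / resolvent representation $\tilde P_- = \frac{1}{2\pi i}\oint (\tilde W - \zeta)^{-1}\,d\zeta$ around the negative spectrum). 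This produces $\tilde P_{\pm} \sim \sum_{j\ge 0}(\tilde P_\pm)_{-j}$ with each $(\tilde P_\pm)_{-j}$ a universal polynomial expression in $\tilde W_{\mathrm{prin}}(x,\xi)=\sigma^\alpha\xi_\alpha$, $\tilde W_0$, and their $x$- and $\xi$-derivatives.

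**Key steps.** (1) Write $P_+ - P_- = \tilde W|\tilde W|^{-1} \bmod \Psi^{-\infty}$, so $A = \mathfrak{tr}(\tilde W |\tilde W|^{-1})$, and expand the symbol of $\tilde W|\tilde W|^{-1}$ using the composition formula; equivalently expand $(\tilde P_+ - \tilde P_-)_{-j}$ directly. (2) Take the pointwise matrix trace of each homogeneous component and show the vanishing \eqref{proposition symbol of tilde A equation 1}: the degree-$0$ term is $\mathrm{tr}\big(\tilde W_{\mathrm{prin}}/h\big) = (\mathrm{tr}\,\sigma^\alpha)\xi_\alpha/\|\xi\| = 0$ since the Pauli matrices are traceless; the degree $-1$ and $-2$ terms must be handled by a parity/symmetry argument — here is where the \emph{Levi-Civita framing} enters decisively. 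In this framing, evaluated at the centre $z$, the framing one-forms satisfy the normal-coordinate-type identities $\partial_\alpha \tilde e_j{}^\beta(z) = 0$ up to curvature, so the symbol of $\tilde W$ at $z$ has $\tilde W_0(z) = 0$ and the lower-order data are expressed purely through the Riemann tensor and its covariant derivatives; combined with the observation that $\mathrm{tr}(\sigma^{\alpha_1}\cdots\sigma^{\alpha_k})$ is real-symmetric for even $k$ and proportional to $E^{\alpha_1\cdots}$ (via $\sigma^\alpha\sigma^\beta = g^{\alpha\beta}\mathrm{Id} + i E^{\alpha\beta}{}_\gamma\sigma^\gamma$-type identities) and the homogeneity in $\xi$, one sees that the candidate terms of order $-1$ and $-2$ are contractions of an odd number of $\xi$'s with tensors built from at most one derivative of the metric, which vanish at $z$ in this gauge. (3) For the order $-3$ term, carry out the actual computation: collect all contributions to $(\tilde P_+ - \tilde P_-)_{-3}$, take the matrix trace using the trace identities for products of Pauli matrices, insert the Levi-Civita-framing Taylor expansions of $\sigma^\alpha$ and the Christoffel symbols about $z$ (so that only $\nabla\mathrm{Ric}$ survives among the first covariant derivatives of curvature, the full Riemann contribution collapsing to Ricci after the trace over the $2\times 2$ spinor indices), and simplify to \eqref{proposition symbol of tilde A equation 2}, with the coefficient $-\tfrac{1}{12}$ and the tensor structure $E^{\alpha\beta\gamma}\nabla_\alpha\mathrm{Ric}_\beta{}^\rho\,\xi_\gamma\xi_\rho/|\xi|^5$ emerging from the bookkeeping.

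**Main obstacle.** The hard part is step (3): the brute-force symbol computation of the order $-3$ term is long, and the real subtlety is organising it so that gauge/framing dependence is manifestly absent — one must check that the answer computed in the Levi-Civita framing at $z$ is independent of the auxiliary global framing $\{e_j\}$ used to generate it, i.e. invariant under the residual $\mathrm{SU}(2)$ (or $\mathrm{Spin}(3)$) gauge freedom, which is what allows the final expression to be written in terms of intrinsic curvature quantities. A secondary difficulty is confirming that the odd-order terms genuinely vanish rather than merely vanishing in the principal part: this requires tracking the subprincipal-symbol contributions carefully (using property (iv) of the matrix trace) and exploiting the self-adjointness of $\tilde W$, which forces certain would-be contributions to be purely imaginary and hence to cancel against their conjugates under the trace. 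I would structure the write-up so that the parity argument disposes of $j=0,-1,-2$ cleanly first, isolating the genuine content in the single order-$(-3)$ calculation.
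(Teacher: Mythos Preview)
Your overall strategy---compute the homogeneous symbol components of $\tilde P_+-\tilde P_-$ via the recursion of \cite{part1}, evaluate at the centre $z$ in the Levi-Civita framing where $\tilde W_0(z)=0$ and $\partial_\alpha\tilde\sigma^\beta(z)=0$, then take the matrix trace---is the same as the paper's. The paper, however, does not attempt any structural parity argument: it proceeds by brute force, fixing geodesic normal coordinates at $z$, exploiting linearity of $\tilde a_{-2}$ (resp.\ $\tilde a_{-3}$) in $\operatorname{Ric}(0)$ (resp.\ $\nabla\operatorname{Ric}(0)$) to turn on one component at a time, fixing $\xi=(0,0,1)+\eta$ by rotational invariance, and then running the $R_{\pm,k},S_{\pm,k},T_{\pm,k},X_{\pm,k}$ algorithm explicitly for $k=1,2,3$. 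The vanishing of $\tilde a_{-1}(z,\xi)$ and $\tilde a_{-2}(z,\xi)$ and the value $-\tfrac{1}{12}$ at order $-3$ are read off from the output.

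The gap in your proposal is the parity/symmetry argument for $j=-1,-2$. Your assertion that ``the candidate terms of order $-1$ and $-2$ are contractions of an odd number of $\xi$'s with tensors built from at most one derivative of the metric, which vanish at $z$ in this gauge'' is not correct as stated: terms of homogeneity $-2$ in $\xi$ naturally carry an \emph{even} number of $\xi$'s (e.g.\ $\operatorname{Ric}_{\alpha\beta}\xi^\alpha\xi^\beta/\|\xi\|^4$), and nothing in the Pauli-trace identities you list forces such contractions to vanish. In the paper's explicit computation one finds, for a representative curvature component, $(X_{\pm,2})_\prin(0,\xi_0)=\operatorname{diag}(\mp\tfrac{1}{48},\pm\tfrac{1}{48})$: each projection contributes a \emph{nonzero} trace-class term at order $-2$, and it is only the specific combination $X_{+,2}-X_{-,2}=\operatorname{diag}(-\tfrac{1}{24},\tfrac{1}{24})$ whose matrix trace vanishes. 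That cancellation is not captured by counting $\xi$'s, nor by the self-adjointness/imaginary-part argument you sketch. If you want a structural proof rather than a direct computation, you would need to show that $(P_+-P_-)_{-2}(z,\xi)$ is a traceless $2\times2$ matrix (e.g.\ a multiple of $\sigma^\alpha\xi_\alpha$ after all contractions), which requires tracking the algebra more carefully than your outline indicates. Absent that, the paper's component-by-component verification is what actually closes the argument.
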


\begin{remark}
Let us compare the content of Theorem~\ref{proposition symbol of tilde A} with the corresponding results for the operator curl. Formulae~\cite[Eqn.~(1.13)]{baer_curl} and~\eqref{proposition symbol of tilde A equation 2} tell us that the principal symbol of the asymmetry operator for curl and Dirac (in the sense of the above theorem) possess the same structure, and differ only by a factor $6$. Note that the two operators possess very similar asymptotic distributions of eigenvalues, as captured by Weyl-type laws: compare~\cite[Theorem~3.6 and Remark~3.7]{baer_curl} with~\cite[Remark~1.5]{dirac}. In fact, similarities between the two operators extend to higher order terms in the asymptotic expansion for the eigenvalue counting function, see~\cite{5terms}.
\end{remark}

\begin{proposition}
\label{prop: disc plus cont}
Fix $x\in M$. For every $y$ in a geodesic neighbourhood of $x$ we have
\begin{equation}\label{Eq: decomposition}
\tilde{\mathfrak{a}}(x,y)=
\frac{1}{72\pi^2} E^{\alpha\beta}{}_\gamma (x) \nabla_{\alpha} \operatorname{Ric}_{\beta\rho} (x) \frac{(x-y)^\gamma (x-y)^\rho}{\dist^2(x,y)}
+
\tilde{\mathfrak{a}}_c(x,y)\,,
\end{equation}
where $\tilde{\mathfrak{a}}_c(x,y)$ is continuous at $x$ in the variable $y$.
\end{proposition}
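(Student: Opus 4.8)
The plan is to derive Proposition~\ref{prop: disc plus cont} from Theorem~\ref{proposition symbol of tilde A} by an explicit inverse-Fourier-transform computation carried out in geodesic normal coordinates centred at $x$ --- the chart realising the Levi-Civita framing at $z=x$, in which $\rho(x)=1$, $g_{\mu\nu}(x)=\delta_{\mu\nu}$ and $\dist(x,y)=|x-y|$ for $y$ in a geodesic neighbourhood of $x$. After replacing $\tilde A$ by a properly supported representative --- the difference being a smoothing operator, whose kernel is $C^\infty$, hence continuous, and is absorbed into $\tilde{\mathfrak a}_c$ --- we have, in the chart,
\[
\tilde{\mathfrak a}(x,y)=\frac{1}{(2\pi)^3}\int_{\mathbb R^3}e^{i(x-y)\cdot\xi}\,\tilde a(x,\xi)\,\dr\xi
\]
as an oscillatory integral in the left symbol $\tilde a$. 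Since, by Theorem~\ref{proposition symbol of tilde A}, the components $\tilde a_0(x,\cdot)$, $\tilde a_{-1}(x,\cdot)$, $\tilde a_{-2}(x,\cdot)$ vanish at the base point, we can write $\tilde a(x,\xi)=\psi(\xi)\,\tilde b(\xi)+r(\xi)$, where $\psi\in C^\infty(\mathbb R^3)$ equals $0$ near the origin and $1$ for $|\xi|\ge 1$, $r$ is a symbol of order $-4$, and $\tilde b$ is the homogeneous extension (discussed below) of
\[
b(\xi):=-\frac{1}{12}E^{\alpha\beta\gamma}(x)\,\nabla_\alpha\Ric_\beta{}^\rho(x)\,\frac{\xi_\gamma\xi_\rho}{|\xi|^5}.
\]
As $-4<-3=-\dim M$, the symbol $r$ lies in $L^1(\mathbb R^3)$, so $\mathcal F^{-1}r$ is bounded and continuous and joins $\tilde{\mathfrak a}_c$; the whole singularity therefore stems from $\psi\tilde b$.

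The key computation is the inverse transform of $\tilde b$. Using the pointwise identity $\xi_\gamma\xi_\rho|\xi|^{-5}=\tfrac13\bigl(\partial_\gamma\partial_\rho|\xi|^{-1}+\delta_{\gamma\rho}|\xi|^{-3}\bigr)$ on $\xi\neq 0$, together with the fact that the $\delta_{\gamma\rho}$-term drops upon contraction with $E^{\alpha\beta\gamma}(x)\nabla_\alpha\Ric_\beta{}^\rho(x)$ --- the Levi-Civita tensor being totally antisymmetric, $\nabla_\alpha\Ric_{\beta\gamma}$ symmetric in $\beta\gamma$ --- one sees that $b$ has vanishing mean over $S^2$, hence a unique homogeneous extension $\tilde b$ of degree $-3$, and that this extension equals $-\tfrac1{36}E^{\alpha\beta\gamma}(x)\nabla_\alpha\Ric_\beta{}^\rho(x)\,\partial_\gamma\partial_\rho|\xi|^{-1}$ as a tempered distribution (the $\delta_{\gamma\rho}$-proportional Dirac-mass discrepancy between the classical and distributional second derivatives being again annihilated by the contraction). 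Since $|\xi|^{-1}\in L^1_{\mathrm{loc}}(\mathbb R^3)$, with $\mathcal F^{-1}[|\xi|^{-1}](w)=\tfrac{1}{2\pi^2}|w|^{-2}$ in $\mathbb R^3$ for the convention fixed above, and $\mathcal F^{-1}[\partial_\gamma\partial_\rho h](w)=-w_\gamma w_\rho\,\mathcal F^{-1}[h](w)$, I get, with $w=x-y$,
\[
\mathcal F^{-1}[\tilde b](w)=\frac{1}{72\pi^2}\,E^{\alpha\beta\gamma}(x)\,\nabla_\alpha\Ric_\beta{}^\rho(x)\,\frac{w_\gamma w_\rho}{|w|^2}\,;
\]
moreover $(1-\psi)\tilde b$ is a compactly supported distribution, so $\mathcal F^{-1}[(1-\psi)\tilde b]$ is smooth and $\mathcal F^{-1}[\psi\tilde b]$ differs from $\mathcal F^{-1}[\tilde b]$ by this smooth function. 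In normal coordinates at $x$ the metric is Euclidean, so the indices on $w=x-y$ may be raised and $|x-y|$ replaced by $\dist(x,y)$, and $\mathcal F^{-1}[\tilde b](x-y)$ becomes exactly the leading term of \eqref{Eq: decomposition}.

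It remains to gather the error terms --- the smoothing remainder from properness, $\mathcal F^{-1}r$, the smooth contribution of $(1-\psi)\tilde b$, and the correction from writing the half-density kernel as a scalar function times $\rho(x)^{1/2}\rho(y)^{1/2}=1+O(\dist^2(x,y))$ --- and to check that their sum $\tilde{\mathfrak a}_c$ is continuous in $y$ at $y=x$; each summand is either $C^\infty$, or bounded and continuous, or $O(\dist^2(x,y))$ times a bounded function, so this is immediate. The main obstacle, and the only delicate point, is the \emph{criticality} of the order $-3=-\dim M$: the inverse Fourier transform of a homogeneous symbol of that order generically carries a $\log\dist(x,y)$ singularity whose coefficient is governed by the symbol's spherical mean --- here $\int_{S^2}\xi_\gamma\xi_\rho|\xi|^{-5}\,\dr S\propto\delta_{\gamma\rho}$ --- and the whole point is that this logarithmic term is killed by the very same antisymmetry--symmetry contraction that makes $b$ have zero mean, leaving only the bounded, degree-zero homogeneous singularity of \eqref{Eq: decomposition}. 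The remaining steps are routine.
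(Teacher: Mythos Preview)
Your argument is correct and follows the same overall strategy as the paper's proof: invert the Fourier transform of the order~$-3$ principal symbol~\eqref{proposition symbol of tilde A equation 2}, observe that the critical-order logarithmic singularity is annihilated by the contraction~\eqref{symmetric part of principal symbol zero}, and absorb everything of order~$\le -4$ into the continuous remainder.

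The technical packaging differs. The paper regularises with the Japanese bracket, invoking \cite[Lemma~7.5]{curl} for the explicit formula
\[
\frac{1}{(2\pi)^3}\int_{\mathbb R^3}\frac{\xi_\gamma\xi_\rho}{\langle\xi\rangle^5}\,e^{-iy\cdot\xi}\,\dr\xi
=-\frac{1}{12\pi^2}\Bigl[2\frac{y_\gamma y_\rho}{|y|^2}+(1+2\ln|y|)\delta_{\gamma\rho}\Bigr]+h_{\gamma\rho}(y),\qquad h_{\gamma\rho}\in C^1,
\]
then kills the $\delta_{\gamma\rho}$-piece (carrying the logarithm) by antisymmetry and handles the discrepancy $|\xi|^{-5}-\langle\xi\rangle^{-5}=O(|\xi|^{-7})$ separately. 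You instead use a cutoff and the pointwise identity $\xi_\gamma\xi_\rho|\xi|^{-5}=\tfrac13\bigl(\partial_\gamma\partial_\rho|\xi|^{-1}+\delta_{\gamma\rho}|\xi|^{-3}\bigr)$, dispose of the $\delta_{\gamma\rho}$-term \emph{before} transforming, and then reduce everything to the elementary transform $\mathcal F^{-1}[|\xi|^{-1}]=\tfrac{1}{2\pi^2}|w|^{-2}$. Your route is self-contained (no external lemma) and makes transparent why the zero-mean condition on $S^2$ is precisely what suppresses the logarithm; the paper's route is shorter given the cited lemma and displays the log term explicitly before cancelling it. Both lead to the same singular term with the same coefficient $\tfrac{1}{72\pi^2}$.
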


Let us immediately note that formula~\eqref{Eq: decomposition} does depend on the choice of local coordinates and we shall elaborate more on this in Section \ref{The regularised trace of the asymmetry operator}. This notwithstanding, the above decomposition allows us to introduce the following definition, which, combined with the ensuing theorem, is the main result of this paper.

\begin{definition}
\label{def: regularised local trace}
We define the \emph{regularised local trace} of the asymmetry operator $A$ as
\begin{equation}\label{Eq: psiDirloc}
\psi_\mathrm{Dir}^\mathrm{loc}(x):=\tilde{\mathfrak{a}}_c(x,x)
\end{equation}
and the \emph{regularised global trace} of the asymmetry operator $A$ as
\end{definition}
\begin{equation}\label{Eq: psiDir}
\psi_\mathrm{Dir}:=\int_M \psi_\mathrm{Dir}^\mathrm{loc}(x) \,\rho(x)\,\mathrm{d}x\,.
\end{equation}

\begin{theorem}
\label{main theorem regularised local trace well defined}
The map $x\mapsto \psi_\mathrm{Dir}^\mathrm{loc}(x)$ defined in accordance with~\eqref{Eq: psiDirloc} is a well defined scalar function, independent of the choice of local coordinates --- a geometric invariant describing the spectral asymmetry of our Dirac operator $W$.
\end{theorem}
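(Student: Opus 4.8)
The plan is to show that the apparent coordinate-dependence of the decomposition in Proposition~\ref{prop: disc plus cont} affects only the singular term, so that the continuous remainder evaluated on the diagonal is intrinsic. First I would fix $x\in M$ and compare two geodesic coordinate systems centred at $x$, say $y\mapsto y$ and $y\mapsto y'$, related by a smooth diffeomorphism $\phi$ with $\phi(x)=x$. Since both are \emph{geodesic} normal coordinates at $x$, the Jacobian $D\phi(x)$ is a fixed linear map (an element of $\mathrm{GL}(3,\mathbb R)$ identifying the two versions of $T_xM$), and $\phi$ agrees with this linear map to second order at $x$; i.e.\ $\phi(y)^\mu = {L^\mu}_\nu\,(x-y)^\nu + O(|x-y|^3)$ where the quadratic term vanishes because Christoffel symbols vanish at the centre of normal coordinates. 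This $O(|x-y|^3)$ control is the crucial gain.

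Next I would transform the Schwartz kernel. Because $\tilde{\mathfrak a}(x,y)$ is a kernel of half-densities, under the change of variables it picks up Jacobian factors $|\det D\phi(y)|^{1/2}$ in each slot; these are smooth and equal to $|\det L|^{1/2}$ at $y=x$, contributing nothing singular. So the only thing to check is that the explicit singular model term
\[
\frac{1}{72\pi^2}\,E^{\alpha\beta}{}_\gamma(x)\,\nabla_\alpha\mathrm{Ric}_{\beta\rho}(x)\,\frac{(x-y)^\gamma (x-y)^\rho}{\dist^2(x,y)}
\]
transforms, up to a term continuous at $y=x$, into the corresponding model term built from the primed coordinates. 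Here I would use two facts: (a) $\dist^2(x,y)=g_{\mu\nu}(x)(x-y)^\mu(x-y)^\nu + O(|x-y|^4)$ in geodesic coordinates, so replacing $\dist^2$ by the frozen quadratic form in either coordinate system changes the model term by something continuous; and (b) the numerator $(x-y)^\gamma(x-y)^\rho$ is a homogeneous degree-two polynomial in $x-y$, and substituting $x-y = L^{-1}(x-y') + O(|x-y'|^3)$ produces the primed quadratic numerator plus a cross term of order $|x-y'|\cdot|x-y'|^3/|x-y'|^2 = |x-y'|^2 \to 0$. Contracting the transformed numerator and the transformed metric with the tensorial coefficient $E^{\alpha\beta}{}_\gamma\nabla_\alpha\mathrm{Ric}_{\beta\rho}$ (which transforms as a genuine tensor at the single point $x$, hence via $L$ in its lower indices and $L^{-1}$ in the raised one) reproduces exactly the primed model term. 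Therefore the difference of the two model terms is continuous at $y=x$, and consequently $\tilde{\mathfrak a}_c(x,y)$ and $\tilde{\mathfrak a}_c'(x,y)$ differ (near $y=x$) only by a continuous function together with the smooth half-density Jacobian; evaluating at $y=x$ shows $\tilde{\mathfrak a}_c(x,x)$ and $\tilde{\mathfrak a}_c'(x,x)$ differ only by the scalar factor $|\det L|$ coming from the two half-density weights. Since $\psi_\mathrm{Dir}^\mathrm{loc}$ is meant to be a \emph{scalar function} (a coefficient of a density, or evaluated against the Riemannian density as in~\eqref{Eq: psiDir}), I would fix the normalisation by dividing out $\rho(x)$ — equivalently reading $\tilde{\mathfrak a}_c(x,x)$ as the value of a density at $x$ — and conclude invariance.

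Finally I would package this as: (i) well-definedness, i.e.\ $\tilde{\mathfrak a}_c(x,x)$ exists, which is immediate from Proposition~\ref{prop: disc plus cont} since $\tilde{\mathfrak a}_c$ is continuous at $x$ in $y$; (ii) independence of the auxiliary choices entering the construction of $\tilde W$ — here I would invoke the gauge-invariance discussion (spectrally equivalent framings are related by a gauge transformation that does not alter the eta-type data, hence does not alter $A=\mathfrak{tr}(P_+-P_-)$ as an operator, only possibly its symbol presentation) and the defining property of the Levi-Civita framing, which depends only on the point $z$ and not on the representative; (iii) coordinate independence, as argued above. I expect the main obstacle to be step (iii): carefully tracking how the nonlinear part of the coordinate change interacts with the homogeneous-degree-zero singular kernel, making fully rigorous the claim that all the error terms generated are genuinely continuous at the diagonal (and not merely bounded), and handling the half-density Jacobians so that the final object is unambiguously a scalar. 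Controlling the $O(|x-y|^3)$ remainder in the coordinate change — which is exactly what makes the cross terms subleading — is the linchpin, and it relies essentially on working with geodesic (normal) coordinates rather than arbitrary ones.
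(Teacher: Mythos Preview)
Your approach is correct and essentially matches the paper's, which simply defers to \cite[Subsection~7.2]{curl}: one tracks how the explicit singular model term in Proposition~\ref{prop: disc plus cont} transforms under a change of local coordinates and checks that the discrepancy is continuous and vanishes on the diagonal, so that $\tilde{\mathfrak a}_c(x,x)$ is intrinsic. One small over-caution worth flagging: two geodesic normal coordinate systems centred at the same point $x$ are related by an \emph{exactly} linear map (both being $\exp_x^{-1}$ composed with a linear identification $T_xM\cong\mathbb{R}^3$), so there is no $O(|x-y|^3)$ remainder at all in that comparison; and for an \emph{arbitrary} coordinate change the remainder in $(x-y)^\alpha$ is only $O(|x-y|^2)$, yet the resulting correction to the homogeneous degree-zero singular kernel is still $O(|x-y|)$ and hence continuous at the diagonal --- so normal coordinates are a convenience here, not the linchpin you suggest.
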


A natural question is: are the geometric invariants from Definition~\ref{def: regularised local trace} new? If not, how do they relate to known geometric invariants? We conjecture that the regularised local and global traces coincide precisely with the classical local and global eta invariants for the massless Dirac operator, respectively. 

More explicitly, consider the quantities\footnote{Recall that we adopt the convention $\lambda_k\neq 0$.}
\begin{equation}
    \label{local eta invariant}
    \eta_{\mathrm{Dir}}^\mathrm{loc}(s;x):=\sum_{k}\frac{\mathrm{sgn}\lambda_k}{|\lambda_k|^s} \,v^*_k(x)\, v_k(x)
\end{equation}
and
\begin{equation}
    \label{global eta invariant}
     \eta_{\mathrm{Dir}}(s):=\sum_{k}\frac{\mathrm{sgn}\lambda_k}{|\lambda_k|^s}\,,
\end{equation}
which are traditionally known as the local and global eta functions for the massless Dirac operator, respectively. It is not hard to show that \eqref{global eta invariant} is absolutely convergent for $\operatorname{Re}s>3$ and it admits a meromorphic continuation to the whole complex plane with possible first order poles at $s\in \mathbb{Z}\cap (-\infty,3]$. The same is true for~\eqref{local eta invariant} for any given $x\in M$. Of course, the local and global eta functions are related as
$\eta_{\mathrm{Dir}}(s)=\int_M \eta_{\mathrm{Dir}}^\mathrm{loc}(s;x)\,\rho(x)\,\mathrm{d}x$. Although we should like to refrain from delving too deep into the literature on the topic (which is incredibly extensive), we cannot avoid mentioning that the eta function is, in fact, holomorphic at $s=0$; the number $\eta(0)$ --- the \emph{eta invariant} --- is the commonly accepted measure of spectral asymmetry. We refer the reader to~\cite{asymm1,asymm2, asymm3, asymm4, hitchin} as foundational references on the subject.

We envisage that
\begin{equation}
    \psi_\mathrm{Dir}^\mathrm{loc}(x)=\eta_{\mathrm{Dir}}^\mathrm{loc}(0;x), \qquad  \quad\psi_\mathrm{Dir}=\eta_{\mathrm{Dir}}(0).
\end{equation}
Whilst providing a fully fledged proof of this correspondence is beyond the scope of the present paper, preliminary calculations suggest that arguments in the spirit of \cite{conjectures} would yield the desired result. We postpone a detailed analysis of this matter until future work.

\subsection*{Notation}
\addcontentsline{toc}{subsection}{Notation}

%\begin{multicols}{2}
\begin{longtable}{l l}
\hline
\\ [-1em]
\multicolumn{1}{c}{\textbf{Symbol}} & 
  \multicolumn{1}{c}{\textbf{Description}} \\ \\ [-1em]
 \hline \hline \\ [-1em]
$\sim$ & Asymptotic expansion \\ \\ [-1em]
%$\ast$ & Hodge dual \eqref{definition of Hodge star} \\ \\ [-1em]
$\|\,\cdot\,\|$ & Riemannian norm \\ \\ [-1em]
$|\,\cdot\,|$ & Euclidean norm \\ \\ [-1em]
$\langle\,\cdot\,\rangle$ & Japanese bracket\\ \\ [-1em]
$A$ & Asymmetry operator, Definition~\ref{asyop} \\ \\ [-1em]
%$A_\mathrm{diag}$, $A_\mathrm{pt}$ & Local decomposition of $A$ as per~\eqref{A diag} and~\eqref{A pt} \\ \\ [-1em]
$\mathfrak{a}(x,y)$ & Integral kernel of the asymmetry operator $A$ \\ \\ [-1em]
%$\curl$ & Curl, as a differential expression \eqref{curl differential experssion} and as an operator \eqref{definition curl} \\ \\ [-1em]
%$\curl_E$ & Extended curl, Definition~\ref{definition extended curl}  \\ \\ [-1em]
%$\curl_{E,\dr},\curl_{E,\delta},\curl_{E,\mathcal{H}}$ & Orthogonal summands of $\curl_E$, Lemma~\ref{lemma invariant subspaces extended curl} \\ \\ [-1em]
%$d$ & Dimension of the manifold $M$, $d\ge2$\\ \\ [-1em]
%$\dr$ & Exterior derivative, Appendix~\ref{Exterior calculus} \\ \\ [-1em]
%$\delta$ & Codifferential, Appendix~\ref{Exterior calculus} \\ \\ [-1em]
%$\Delta:=-\delta\dr$ & (Nonpositive) Laplace--Beltrami operator \\ \\ [-1em]
%$\boldsymbol{\Delta}:=-(\dr\delta+\delta\dr)$ & (Nonpositive) Hodge Laplacian\\ \\ [-1em]
$\dist$ & Geodesic distance \\ \\ [-1em]
$e_j{}^\alpha(x)$, $e^k{}_\beta(x)$ & Framing and dual framing \\ \\ [-1em]
$\tilde e_j{}^\alpha(x)$, $\tilde e^k{}_\beta(x)$ & Levi-Civita framing and dual Levi-Civita framing, Definition~\ref{def: Levi-Civita framing} \\ \\ [-1em]
$\varepsilon_{\alpha\beta\gamma}$ & Totally antisymmetric symbol, $\varepsilon_{123}=+1$ \\ \\ [-1em]
$E_{\alpha\beta\gamma}$ & Totally antisymmetric tensor, Equation \eqref{Eabc} \\ \\ [-1em]
$f_{x^\alpha}$ & Partial derivative of $f$ with respect to $x^\alpha$ \\ \\ [-1em]
$g$ & Riemannian metric \\ \\ [-1em]
%$G$ & Einstein tensor \\ \\ [-1em]
$\gamma(z,x;\tau)$ & Geodesic connecting $z$ to $x$, with $\gamma(z,x;0)=z$ and $\gamma(z,x;1)=x$ \\ \\ [-1em]
$\Gamma^\alpha{}_{\beta\gamma}$ & Christoffel symbols\\ \\ [-1em]
$\eta_{\text{Dir}}^\loc(s;x)$ & Local eta function of the operator $W$, Equation \eqref{local eta invariant} \\ \\ [-1em]
$\eta_{\text{Dir}}(s)$ & Eta function of the operator $W$, Equation \eqref{global eta invariant} \\ \\ [-1em]
$\text{H}^m(M)$ & Generalisation of the usual Sobolev spaces $\text{H}^m$ on coloumns of half-densities \\ \\ [-1em]
%$\mathcal{H}^k(M)$ & Harmonic $k$-forms over $M$ \\ \\ [-1em]
$\operatorname{I}$ & Identity matrix \\ \\ [-1em]
$\operatorname{Id}$ & Identity operator \\ \\ [-1em]
$(\lambda_j, u_j)$, $j=\pm1, \pm2, \ldots$ & Eigensystem for $W$ \\ \\ [-1em]
%$(\mu_j, f_j)$, $j=0,1,2,\ldots$ & Eigensystem for $-\Delta$ \\ \\ [-1em]
$M$ & Connected oriented closed manifold\\ \\ [-1em]
$\operatorname{mod} \ \Psi^{-\infty}$ & Modulo an integral operator with infinitely smooth kernel \\ \\ [-1em]
$P_0$, $P_\pm$ & Equations \eqref{P_+}, \eqref{P_-} and \eqref{P_0} \\ \\ [-1em]
$p_\pm(x,\xi)$ & Full symbol of $P_\pm$ \\ \\ [-1em]
$Q_\prin$ & Principal symbol of the pseudodifferential operator $Q$ \\ \\ [-1em]
$Q_{\prin,s}$ & Principal symbol of $Q$, a pseudodifferential operator of order $-s$ \\ \\ [-1em]
$Q_{\sub}$ & Subprincipal symbol of $Q$, for operators on coloumns of half-densities  \\ \\ [-1em]
$q(x, \xi)$ & Full symbol of $Q$  \\ \\ [-1em]
$\mathfrak{q}(x, y)$ & Schwartz (integral) kernel of $Q$  \\ \\ [-1em]
$\Riem$, $\Ric$, $\text{Sc}$ & Riemann curvature tensor, Ricci tensor and scalar curvature \\ \\ [-1em]
$\rho(x)$ & Riemannian density \\ \\ [-1em]
$\mathbb{S}_r(x)$ & Geodesic sphere of radius $r$ centred at $x\in M$\\ \\ [-1em]
$\mathfrak{tr}$ & Matrix trace, Equation \eqref{matrixtr} \\ \\ [-1em]
%$\operatorname{tr}_\epsilon$ & Matrix trace with $\epsilon$-cut-off \eqref{matrixtrepsilon} \\ \\ [-1em]
$\operatorname{Tr}$ & Operator trace, Equation \eqref{optrace}  \\ \\ [-1em]
$TM$, $T^*M$ & Tangent and cotangent bundle \\ \\ [-1em]
%$\Omega^k(M)$ & Differential $k$-forms over $M$ \\ \\ [-1em]
$\psi_{\text{Dir}}^\loc(x)$ & Regularised local trace of $A$, Equation \eqref{Eq: psiDirloc} \\ \\ [-1em]
$\psi_{\text{Dir}}$ & Regularised global trace of $A$, Equation \eqref{Eq: psiDir} \\ \\ [-1em]
$\Psi^m$ & Classical pseudodifferential operators of order $m$ \\ \\ [-1em]
$\Psi^{-\infty}$ & Infinitely smoothing operators, Equation $\eqref{smoothop}$ \\ \\ [-1em]
%$\zeta_Q(s)$ & Zeta function of the operator $Q$ \\ \\ [-1em]
%$Z$ & Parallel transport map \eqref{Q acting on 1-forms 3}, see also Appendix~\ref{appendix parallel transport} \\ \\ [-1em]
\hline
\end{longtable}
%\end{multicols}

\section{Positive and negative spectral projections}
\label{Positive and negative spectral projections}

The proof of Theorem~\ref{proposition symbol of tilde A} relies on cancellations taking place at the level of the symbols $p_\pm(x,\xi)$ of the pseudodifferential projections $P_{\pm}$ (recall~\eqref{definition asymmetry operator equation}).

\

\noindent Let us denote by 
\begin{equation*}
    (P_{\pm})_{\text{prin}} (x, \xi) =: P^{(\pm)}(x, \xi),
\end{equation*}
the principal symbols of $P_{\pm}$, which read 
\begin{equation}
\label{P^pm}
    P^{(\pm)} (x, \xi) = \frac{1}{2} \left(\text{I} \pm \frac{W_{\prin}(x, \xi)}{h(x, \xi)} \right).
\end{equation}
Of course, the quantities \eqref{P^pm} are the eigenprojections of \eqref{Wprin} corresponding to the eigenvalues $h^{(\pm)}(x,\xi)$.
%
%
%
%
%\begin{theorem}[\emph{Subprincipal symbol of the pseudodifferential projections}]
%\label{subpsproj}
%    The subprincipal symbol of the pseudodifferential projections $\{P_j\} \subseteq \Psi^0(M)$ reads 
%    \begin{flalign}
%    \label{subf1}
%        (P_j)_{\text{sub}} &= \frac{i}{2} \{P^{(j)}, P^{(j)}\} - i P^{(j)} \{P^{(j)}, P^{(j)}\} P^{(j)} \\ \notag & + \sum_{l \ne j} \frac{P^{(j)} (B_{\text{sub}} - i Q^{(j)}) P^{(l)} + P^{(l)} (B_{\text{sub}} + i Q^{(j)}) P^{(j)}}{h^{(j)} - h^{(l)}},
%    \end{flalign}
%    where we denote by
%    \begin{equation*}
%        Q^{(j)} := \frac{1}{2} \left( \{B_{\text{prin}}, P^{(j)} \} - \{P^{(j)}, B_{\text{prin}}\}\right), 
%    \end{equation*}
%    $\{\cdot, \cdot\}$ denote the Poisson bracket of matrix functions of the cotangent bundle locally defined by 
%    \begin{equation*}
%        \{C, D\} := \sum_{\alpha =1}^{d} \left(\frac{\partial C}{\partial x^{\alpha}} \frac{\partial D}{\partial \xi_{\alpha}} - \frac{\partial C}{\partial \xi_{\alpha}} \frac{\partial D}{\partial x^{\alpha}}\right).
%    \end{equation*}
%\end{theorem}
%
The full symbols of the pseudodifferential projections $P_{\pm}$ can be constructed explicitly by means of an algorithmic procedure described in~\cite[Section~4.3]{part1}. We recall below, in an abridged and simplified fashion, the key steps thereof.

\begin{itemize}
    \item[1.] \textbf{Step 1}: Choose two pseudodifferential operators of order zero $P_{\pm,0} \in \Psi^0(M)$ such that 
    \begin{equation*}
        (P_{\pm,0})_{\text{prin}} = P^{(\pm)}\,,
    \end{equation*}
    where $P^{(\pm)}$ given by~\eqref{P^pm}.
    
    \item[2.] \textbf{Step 2}: For $k=1,2,\dots$, define 
    \begin{equation*}
        P_{\pm,k} := P_{\pm,0} + \sum_{l=1}^k X_{\pm,l}\,,
    \end{equation*}
    where $X_{\pm, l} \in \Psi^{-l}(M)$ are pseudodifferential operators of order $-l$ determined by the following recursive procedure. 
    \item[3.] \textbf{Step 3}: Assuming one has determined $P_{\pm, k-1}$, compute the following quantities:
    \begin{itemize}
        \item[(a)] $R_{\pm,k} := - \left((P_{\pm,k-1})^2 - P_{\pm,k-1}\right)_{\text{prin}, k}\,$, 
        \item[(b)] $S_{\pm,k} := - R_{\pm,k} + P^{(\pm)} R_{\pm,k} + R_{\pm,k} P^{(\pm)}$, 
        \item[(c)] $T_{\pm,k} := [P_{\pm,k-1}, W]_{\text{prin}, k-m} + [S_{\pm,k}, W_{\text{prin}}]\,$, 
    \end{itemize}
    where $[\,\cdot\,,\,\cdot\,]$ denotes the commutator. The pseudodifferential operators $X_{\pm,k} \in \Psi^{-k}(M)$ are then chosen in such a way that their principal symbols read
    \begin{equation}
    \label{Eq: Xpmk}
        (X_{\pm,k})_{\text{prin}} = S_{\pm,k} \pm \frac{1}{2h}\left(P^{(\pm)} T_{\pm,k} P^{(\mp)} - P^{(\mp)} T_{\pm,k} P^{(\pm)}\right)\,.
    \end{equation}
    \item[4.] \textbf{Step 4}: Set 
    \begin{equation}
    \label{P_+-}
        P_\pm \sim P_{\pm,0} + \sum_{l=1}^{\infty} X_{\pm,l},
    \end{equation}
    where the symbol $\sim$ denotes an asymptotic expansion in smoothness, namely, if one truncates the series appearing in formula~\eqref{P_+-} after $k$ terms, one obtains an approximation of $P_{\pm}$ modulo a pseudodifferential operator of order $-k-1$. 
\end{itemize}

\begin{remark}   
\label{remark2.1}
    Note that one can choose the pseudodifferential operators $P_{\pm,0}$ by additionally requiring that the sub-leading homogeneous component of their symbols vanish identically, that is
    \begin{equation*}
        [p_{\pm, 0}(x,\xi)]_{-1} = 0.
    \end{equation*}
    This is convenient in that it implies
    \begin{equation*}
        [p_{\pm,1}]_{-1}=(X_{\pm, 1})_{\text{prin}}.
    \end{equation*}
    Thus, the task of determining $P_{\pm,1}$ reduces to the computation of $(X_{\pm, 1})_{\text{prin}}$. This argument can be iterated at each step of the above algorithm, and will help to reduce the computational complexity of the calculations in the upcoming sections.   
\end{remark}

\section{The asymmetry operator}
\label{The asymmetry operator}

In this section we will study the asymmetry operator $A$ and prepare the ground to demonstrate, in the next section, that one can compute a regularised trace thereof, thus obtaining our local and global geometric invariants and proving our main result.

\subsection{The Dirac operator and gauge transformations}
\label{The Dirac operator and gauge transformations}

Definition~\ref{masslessdirop} depends manifestly on the choice of framing $\{e_j\}_{j=1}^3$. However, the \emph{spectrum} of $W$ does not. In this subsection, we will briefly elaborate on this matter, clarifying some aspects that will be relevant later on.

Suppose we are given a second framing $\{\widetilde{e}_j\}_{j=1}^3$ with the same orientation as $\{e_j\}_{j=1}^3$, and let us denote by $\widetilde{W}$ the corresponding massless Dirac operator. Then there exists a smooth matrix-valued function $O \in C^{\infty}(M; SO(3))$ such that the two framings are related as
\begin{equation}
\label{fram2}
    e_j := O_j{}^k \, \widetilde{e}_k, \qquad j = 1, 2, 3.
\end{equation}
Since $SU(2)$ is the double cover of $SO(3)$, the rotation $O$ of the framing corresponds, at the level of the operator, to a gauge transformation $G \in C^{\infty}(M; SU(2))$ connecting $\widetilde{W}$ and $W$ as
\begin{equation}
\label{fram3}
    W:= G^* \widetilde{W} G, 
\end{equation}
where $^*$ denotes Hermitian conjugation. 
Formula \eqref{fram3} implies\footnote{We denoted with a tilde quantities associated with the `new' framing $\{\widetilde{e}_j\}_{j=1}^3$.}
\begin{equation}\label{Eq: Gauge Transformed Dirac}
   W = - i G^* \widetilde{\sigma}^{\alpha} G \partial_{\alpha} - i G^*\widetilde{\sigma}^{\alpha}   (\partial_{\alpha} G)+ G^*\widetilde{W}_0 G\,, 
\end{equation}
from which it readily follows that
\begin{equation}
    W_{\prin} = G^* \widetilde{W}_{\prin} G\,, \qquad W_0 = - i G^*\widetilde{\sigma}^{\alpha}   (\partial_{\alpha} G) + G^*\widetilde{W}_0 G\,.
\end{equation}
Furthermore, we have
\begin{flalign}
    P^{(\pm)} = G^* \widetilde{P}^{(\pm)} G\,.
\end{flalign}

Now, if $u_k$ is an eigenspinor of the massless Dirac operator $W$ corresponding to the eigenvalue $\lambda_k$, then $\widetilde{u}_k := G u_k$ is an eigenspinor of $\widetilde{W}$ corresponding to the same eigenvalue, as one can easily see from~\eqref{Eq: Gauge Transformed Dirac}. This entails, on the one hand, that the $P_\pm$ behaves covariantly under a gauge transformation. On the other hand, it guarantees that, when we consider the asymmetry operator~\eqref{definition asymmetry operator equation}, if a regularised operator trace is constructed via a procedure which is covariant under gauge transformations, then cyclicity of the trace entails independence of the outcome from $G(x)$. We will rely on this simple fact later on in the paper. %{\color{amber} Observe that, in the following sections, we shall work with a Levi-Civita framing which will be denoted -- with a slight abuse of notation -- by $\{\tilde{e}_j\}_{j=1}^3$}.

\subsection{Local calculations for the Levi-Civita framing}
\label{Local calculations with the Levi-Civita framing}

The symbol of the asymmetry operator~\eqref{definition asymmetry operator equation} can be expresses in terms of two types of geometric invariants: torsion of the Weitzenb\"ock connection associated with framing (see, e.g.,~\cite[Appendix~A]{dirac}) and curvature of the Levi-Civita connection. Now, it is not hard to check that, for a generic choice of framing, the asymmetry operator $A$ is of order $-1$, with principal symbol expressed in terms of torsion. Hence, unlike in the case of curl examined in \cite{curl}, $A$ appears to be quite far from being trace-class (which, let us recall, would correspond to the order of the operator being strictly less than $-3$), and thus the strategy from ~\cite{curl, conjectures} seems at first glance rather challenging from the computational point of view.

However, as discussed in the previous subsection, spectral invariants do \emph{not} depend on the framing; therefore, all contributions containing torsion must eventually cancel each other out. On the basis of this observation, in the following we will fix a point $z\in M$ and perform calculations locally, in a neighbourhood of $z\in M$, for the particular choice of the Levi-Civita framing based at $z$ (recall Definition~\ref{def: Levi-Civita framing}). The advantage in this is that the torsion of the Weitzenb\"ock connection for the Levi-Civita framing based at $z$ vanishes at $z$ \cite[App. A]{dirac} --- a property that we can use to our advantage in the calculations. The price one pays in pursuing this strategy is that intermediate steps will no longer be invariant under changes of framing and local coordinates, but the final result will be.

\

% In order to prove our main result, we will rely on an auxiliary local framing, which we call the \emph{Levi-Civita framing}, cf.~\cite[\S~7.1]{dirac} \cite[\S~8]{part1}. Throughout this subsection, we will borrow some intermediate formulae from~\cite[App. E]{curl}. 

Let us fix a point $z\in M$ and choose geodesic normal coordinates centered thereat. Given our global framing $\{e_j\}_{j=1}^3$, we define
\begin{equation}
    V_j := e_j (z) \in T_z M
\end{equation}
and denote by $\{\widetilde{e}_j\}_{j=1}^3$ the Levi-Civita framing generated by $\{e_j\}_{j=1,2,3}$ at $z$, in accordance with Definition~\ref{def: Levi-Civita framing}. In the remainder of this section we will show that \emph{locally}, in the sense of Theorem~\ref{proposition symbol of tilde A}, the asymmetry operator $\tilde A$ associated with $\{\widetilde{e}_j\}_{j=1}^3$ is of order $-3$, thus opening the way for a regularisation procedure of the local trace in the spirit of \cite{curl}.

To this end, let us start by working out explicit expressions for some quantities of interest in the chosen coordinate system.

\begin{proposition}
\label{Prop: framing in normal coordinates}
In the chosen normal coordinate system and for $x$ in a geodesic neighborhood of $z\in M$, the Levi-Civita framing admits the following expansion:
   \begin{flalign}
\label{explevicivfin}
\widetilde{e}_j{}^{\alpha} (x) &= V_j{}^{\alpha} + \frac{1}{6} V_j{}^{\mu}\operatorname{Riem}^{\alpha}{}_{\beta \mu \nu} (z) x^{\beta} x^{\nu} - \frac{1}{6} V_j{}^{\mu} \partial_{\sigma} \partial_{\nu} \Gamma^{\alpha}{}_{\beta \mu} (z) x^{\beta} x^{\nu} x^{\sigma} + O (|x|^4). 
\end{flalign}
\end{proposition}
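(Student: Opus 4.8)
The plan is to exploit the defining property of the Levi-Civita framing: $\widetilde e_j(x)$ is the parallel transport of $V_j = e_j(z)$ along the unique geodesic $\gamma$ joining $z$ to $x$. In geodesic normal coordinates centred at $z$, this geodesic is simply the straight line $\tau \mapsto \tau x$, $\tau \in [0,1]$, which is the key simplification that makes the whole computation tractable. So I would first write down the parallel transport ODE along $\gamma$, namely $\frac{\dr}{\dr\tau}\widetilde e_j{}^\alpha(\tau x) + \Gamma^\alpha{}_{\beta\mu}(\tau x)\, x^\beta\, \widetilde e_j{}^\mu(\tau x) = 0$, with initial condition $\widetilde e_j{}^\alpha(0) = V_j{}^\alpha$, and then solve it perturbatively in powers of $\tau$ (equivalently, in powers of $x$).

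The second ingredient is the Taylor expansion of the Christoffel symbols about $z$ in normal coordinates. Here I would use the standard facts: $\Gamma^\alpha{}_{\beta\mu}(z) = 0$, and the first-order Taylor coefficient $\partial_\nu \Gamma^\alpha{}_{\beta\mu}(z)$ is determined by the curvature tensor — after symmetrisation over the lower indices forced by the $x^\beta x^\nu$ contraction one gets the familiar relation $\partial_{(\nu}\Gamma^\alpha{}_{\beta)\mu}(z) = -\tfrac13(\Riem^\alpha{}_{\beta\mu\nu} + \Riem^\alpha{}_{\nu\mu\beta})(z)$ (with the paper's curvature conventions); the second-order coefficient $\partial_\sigma\partial_\nu\Gamma^\alpha{}_{\beta\mu}(z)$ is left unresolved and simply carried along, matching the statement. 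Substituting $\Gamma^\alpha{}_{\beta\mu}(\tau x) = \tau\, x^\nu \partial_\nu\Gamma^\alpha{}_{\beta\mu}(z) + \tfrac{\tau^2}{2} x^\nu x^\sigma \partial_\sigma\partial_\nu\Gamma^\alpha{}_{\beta\mu}(z) + O(|x|^3\tau^3)$ into the ODE and integrating term by term from $0$ to $1$ yields $\widetilde e_j{}^\alpha(x) = V_j{}^\alpha - \tfrac12 V_j{}^\mu x^\beta x^\nu \partial_\nu\Gamma^\alpha{}_{\beta\mu}(z) - \tfrac16 V_j{}^\mu x^\beta x^\nu x^\sigma \partial_\sigma\partial_\nu\Gamma^\alpha{}_{\beta\mu}(z) + (\text{nonlinear iterate}) + O(|x|^4)$.

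I then need to check that the quadratic term reorganises into the curvature form stated in \eqref{explevicivfin}: using the symmetrisation identity above together with the first Bianchi identity to kill the $\Riem^\alpha{}_{\nu\mu\beta}x^\beta x^\nu$ piece appropriately, the coefficient $-\tfrac12 x^\beta x^\nu\partial_\nu\Gamma^\alpha{}_{\beta\mu}(z)$ collapses to $\tfrac16 \Riem^\alpha{}_{\beta\mu\nu}(z) x^\beta x^\nu$, giving the $\tfrac16 V_j{}^\mu \Riem^\alpha{}_{\beta\mu\nu}(z)x^\beta x^\nu$ term. For the cubic term I must confirm two things: that the Picard iterate of the ODE contributes only at order $|x|^4$ and higher (it does, since the lowest-order term of $\Gamma$ is already $O(|x|)$, so two insertions of $\Gamma$ give $O(|x|^2)$ in the integrand, hence $O(|x|^3)$ only after one integration but with an extra factor that pushes the genuinely new contribution to $O(|x|^4)$ — this book-keeping needs care), and that nothing further simplifies the bare $-\tfrac16 V_j{}^\mu \partial_\sigma\partial_\nu\Gamma^\alpha{}_{\beta\mu}(z) x^\beta x^\nu x^\sigma$ term, so it appears as written.

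The main obstacle I anticipate is precisely this order-counting in the Picard iteration: one must be scrupulous that the second iterate of the parallel-transport integral equation, which naively looks like it could feed into the $O(|x|^3)$ coefficient, in fact only affects $O(|x|^4)$, so that the cubic coefficient is genuinely just the single $\partial_\sigma\partial_\nu\Gamma$ term. A secondary subtlety is keeping the normal-coordinate identities for the derivatives of $\Gamma$ consistent with the sign and index conventions fixed in Section~\ref{Statement of the problem and main results}, so that the curvature term comes out with coefficient exactly $+\tfrac16$ and the right index placement $\Riem^\alpha{}_{\beta\mu\nu}$. Everything else is routine integration of polynomials in $\tau$ over $[0,1]$.
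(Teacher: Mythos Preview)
Your proposal is correct and follows essentially the same strategy as the paper: write the parallel transport equation in integral form along the radial geodesic $\tau\mapsto\tau x$ in normal coordinates, Taylor-expand the Christoffel symbols, and iterate to the required order. The only cosmetic difference is that the paper shortcuts the quadratic step by quoting the $O(|x|^2)$ expansion of $\widetilde e_j$ from \cite[Lemma~7.2]{dirac} and then plugs that back into the integral equation to extract the cubic term, whereas you derive the quadratic term directly from the normal-coordinate identity $\partial_\nu\Gamma^\alpha{}_{\beta\mu}(z)=-\tfrac13(\Riem^\alpha{}_{\beta\mu\nu}+\Riem^\alpha{}_{\mu\beta\nu})(z)$; note that the vanishing of the second curvature term when contracted with $x^\beta x^\nu$ is simply antisymmetry in the last two indices, not the first Bianchi identity, and your order-counting for the second Picard iterate (integrand $O(t^3|x|^4)$, hence contribution $O(|x|^4)$) is exactly right.
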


\begin{proof}
Let $\mathcal{U}_z$ be a geodesic neighborhood of $z$ and let $x \in \mathcal{U}_z$. Let $\gamma: [0,1] \rightarrow M$ be the unique geodesic connecting $z$ to $x$, $\gamma(z, x;0)=z$, $\gamma(z,x;1)=x$, paramterised by arc-length. Now, \cite[Lemma 7.2]{dirac} implies 
\begin{equation}
\label{expframe}
    \widetilde{e}_j{}^{\mu}(tx) = 
     V_j{}^{\mu}
    + \frac{t^2}{6} V_j{}^{\mu}\operatorname{Riem}^{\alpha}{}_{\beta \mu \nu} (z) x^{\beta} x^{\nu}
 + O(|x|^3)\,.
    \end{equation}
By definition, the Levi-Civita framing satisfies
$
    \dot{\widetilde{e}}_j{}^{\alpha} (\gamma(t)) = - \dot{\gamma}^{\beta}(t) \Gamma^{\alpha}{}_{\beta \mu} (\gamma(t)) \widetilde{e}_j{}^{\mu} (\gamma(t))
$
for all $t\in[0,1]$, which, upon integration, yields
\begin{flalign}
\label{framingint}
    \widetilde{e}_j{}^{\alpha} (x) &= V_j{}^{\alpha} - \int_0^1 \Gamma^{\alpha}{}_{\beta \mu} (t x) \, \widetilde{e}_j{}^{\mu}(t x) \, x^{\beta}\,\dr t\,.
\end{flalign}
Plugging~\eqref{expframe} into~\eqref{framingint}, expanding all quantities modulo $O(|x|^4)$ and integrating in $t$, one obtains the claim.
\end{proof}

A straightforward consequence of the previous proposition is the expansion of the projected Pauli matrices and their derivatives in a normal neighbourhood of $z$.

\begin{corollary}
In geodesic normal coordinates $x$ centered at $z \in M$ we have
\begin{flalign}
\label{eq1pauli}
\widetilde{\sigma}^{\alpha}(z) &= \sigma^{\alpha} (z), \\
    \label{eq2pauli}
    \left[\widetilde{\sigma}^{\alpha}\right]_{x^{\beta}} (z) &= 0,\\
    \label{eq3pauli}
    \left[\widetilde{\sigma}^{\alpha}\right]_{x^{\beta}x^{\delta}} (z) &= \frac{1}{6} \left(\mathrm{Riem}^{\alpha}{}_{\beta\mu\delta } (z) + \mathrm{Riem}^{\alpha}{}_{\delta\mu\beta} (z)\right)\sigma^{\mu}(z), \\
    \label{eq4pauli}
\left[\widetilde{\sigma}^{\alpha}\right]_{x^{\beta} x^{\delta}x^{\gamma}} (z) &= - \frac{1}{3} \left( \partial_{\beta} \partial_{\delta} \Gamma^{\alpha}{}_{\gamma \mu} (z) + \partial_{\gamma} \partial_{\beta} \Gamma^{\alpha}{}_{\delta \mu} (z) + \partial_{\delta} \partial_{\gamma} \Gamma^{\alpha}{}_{\beta \mu} (z)\right)\sigma^{\mu}(z).
\end{flalign}
\end{corollary}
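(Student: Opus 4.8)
The plan is to exploit that, by the defining relation~\eqref{pauliproj} applied to the Levi-Civita framing, the projected Pauli matrices are nothing but the constant Pauli matrices contracted with the framing components, $\widetilde{\sigma}^{\alpha}(x)=s^j\,\widetilde{e}_j{}^{\alpha}(x)$. Since the $s^j$ do not depend on $x$, every coordinate derivative passes straight through to the framing, $\partial_{x^{\beta_1}}\cdots\partial_{x^{\beta_k}}\widetilde{\sigma}^{\alpha}=s^j\,\partial_{x^{\beta_1}}\cdots\partial_{x^{\beta_k}}\widetilde{e}_j{}^{\alpha}$. Thus all four identities reduce to reading off the Taylor derivatives of $\widetilde{e}_j{}^{\alpha}$ at $z$ --- i.e.\ at the origin $x=0$ of our geodesic normal coordinate system --- from the expansion~\eqref{explevicivfin} established in Proposition~\ref{Prop: framing in normal coordinates}, and then contracting with $s^j$ via the identity $s^jV_j{}^{\mu}=s^j e_j{}^{\mu}(z)=\sigma^{\mu}(z)$, which follows directly from~\eqref{pauliproj} and the definition $V_j=e_j(z)$.

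The zeroth- and first-order statements~\eqref{eq1pauli}--\eqref{eq2pauli} are then immediate. Evaluating~\eqref{explevicivfin} at $x=0$ gives $\widetilde{e}_j{}^{\alpha}(z)=V_j{}^{\alpha}$, hence $\widetilde{\sigma}^{\alpha}(z)=s^jV_j{}^{\alpha}=\sigma^{\alpha}(z)$; and, since~\eqref{explevicivfin} carries no term linear in $x$, the first derivatives of $\widetilde{e}_j{}^{\alpha}$, and therefore of $\widetilde{\sigma}^{\alpha}$, vanish at $z$. For the remaining two identities I would differentiate the quadratic and cubic terms of~\eqref{explevicivfin} and evaluate at the origin, using the elementary rule $\partial_{x^{\gamma_1}}\cdots\partial_{x^{\gamma_k}}(x^{\beta_1}\cdots x^{\beta_k})|_{x=0}=\sum_{\pi\in S_k}\prod_{i}\delta^{\beta_i}_{\gamma_{\pi(i)}}$, which says that the $k$-th derivative at the origin symmetrises the coefficient over its $k$ slots. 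For the quadratic term this yields the symmetrisation in $(\beta,\delta)$ of $V_j{}^{\mu}\operatorname{Riem}^{\alpha}{}_{\beta\mu\delta}(z)$ and, after contraction with $s^j$, exactly~\eqref{eq3pauli}.

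The third-order identity~\eqref{eq4pauli} is the only step calling for genuine care, and I expect the combinatorics there to be the main (albeit mild) obstacle. The cubic coefficient $V_j{}^{\mu}\partial_{\sigma}\partial_{\nu}\Gamma^{\alpha}{}_{\beta\mu}(z)$ is symmetric in the two derivative indices $(\nu,\sigma)$ but \emph{not} in the Christoffel index $\beta$, so the symmetrisation over $S_3$ collapses pairwise: the six permutations group into three equal pairs, converting the prefactor $-\tfrac{1}{6}$ into $-\tfrac{1}{3}$ and leaving precisely the three inequivalent terms appearing in~\eqref{eq4pauli}. In matching the index placement one uses freely that mixed partial derivatives commute. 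No analytic input beyond Proposition~\ref{Prop: framing in normal coordinates} is needed; the entire argument is bookkeeping of these symmetrisation factors.
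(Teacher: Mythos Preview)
Your proposal is correct and matches the paper's approach: the paper presents this corollary as ``a straightforward consequence of the previous proposition'' without further details, and your argument --- differentiating the Taylor expansion~\eqref{explevicivfin} of the Levi-Civita framing and contracting with the constant Pauli matrices via $s^j V_j{}^{\mu}=\sigma^{\mu}(z)$ --- is precisely the intended straightforward derivation. Your handling of the symmetrisation combinatorics, including the collapse of the $S_3$ sum into three pairs for~\eqref{eq4pauli}, is accurate.
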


\noindent We are now ready to prove our first main result, Theorem~\ref{proposition symbol of tilde A}.

\begin{proof}[Proof of Theorem~\ref{proposition symbol of tilde A}]
For the sake of clarity, we will break the somewhat lengthy proof of~\eqref{proposition symbol of tilde A equation 1} and~\eqref{proposition symbol of tilde A equation 2} into several steps. In the upcoming argument, we shall exploit ideas from~\cite[Section~6]{curl}.

To simplify the calculations, let us introduce some simplifications in the algorithm outlined in Section~\ref{Positive and negative spectral projections}.
%These simplifications allow to reduce the computations to the manipulations of expressions of polynomial type. Since an analytic approach would involve challenging calculations, we proceed implementing the algorithm via symbolic computations. \\
Henceforth, we shall label by (a) the set of simplifications relied upon in the proof of~\eqref{proposition symbol of tilde A equation 1} and by (b) those exploited in the computation of~\eqref{proposition symbol of tilde A equation 2}. When not explicitly stated, the simplifications apply to both cases. \\

\textbf{Simplification 1}. 

\noindent Recall that in geodesic normal coordinates centered at $z = 0$ the Riemannian metric reads \cite[Eqn.~(3.4)]{Schoen and Yau}
\begin{equation}
\label{expansion metric}
    g_{\alpha\beta}(x) = \delta_{\alpha\beta}(x) - \frac{1}{3} \operatorname{Riem}_{\alpha\mu\beta\nu}(0)\, x^{\mu} x^{\nu} - \frac{1}{6} \nabla_{\sigma} \operatorname{Riem}_{\alpha\mu \beta\nu}(0)\, x^{\mu} x^{\nu} x^{\sigma} + O(|x|^4). 
\end{equation}
\begin{enumerate}
    \item[(a)] To determine the homogeneous components of order $-1$ and $-2$ of the symbol of $\tilde A$ it suffices to consider \eqref{expansion metric} modulo $O(|x|^3)$. Therefore, we will drop cubic terms in \eqref{expansion metric}.
    \item[(b)]  To determine the homogeneous component of order $-3$ we will retain the cubic terms as well. 
\end{enumerate}

\

\textbf{Simplification 2}.

\begin{itemize}
    \item[(a)] It is not hard to see that $a_{-2}$ will depend linearly on the Ricci tensor $\Ric$. Here we are implicitly using the well known fact that in dimension $d=3$ the Riemann tensor can be expressed in terms of the Ricci tensor via the identity
    \begin{flalign}
\label{Riemtoric}
\operatorname{Riem}_{\alpha\beta\gamma \delta}(x) &= \operatorname{Ric}_{\alpha\gamma}(x) g_{\beta\delta}(x) - \operatorname{Ric}_{\alpha \delta}(x) g_{\beta \gamma}(x) + \operatorname{Ric}_{\beta\delta}(x) g_{\alpha\gamma}(x) \\ \notag & - \operatorname{Ric}_{\beta \gamma} (x) g_{\alpha\delta}(x) + \frac{\operatorname{Sc}(x)}{2} \left( g_{\alpha\delta}(x) g_{\beta\gamma}(x) - g_{\alpha\gamma}(x) g_{\beta  \delta}(x)\right)\,.
\end{flalign}
    Therefore, we can set 
\begin{equation}
\label{Ric zero simplification a}
    \Ric (0) = \begin{pmatrix}
        c_1 & c_2 & c_3 \\
        c_2 & c_4 & c_5 \\
        c_3 & c_5 & c_6
    \end{pmatrix}
\end{equation}
and perform all computations taking only one linearly independent component of $\Ric$ at a time to be nonzero. In what follows, we will provide intermediate steps for the particular case $c_1 = 1$, $c_j = 0$ for $j \ne 1$. To prove \eqref{proposition symbol of tilde A equation 1} it then suffices to run the simplified algorithm six times.
\item[(b)] 
It is not hard to see that $a_{-3}$ will be proportional to $\nabla \Riem$.
Therefore, in proving \eqref{proposition symbol of tilde A equation 2} one can assume that 
\begin{equation}
\label{sim2}
    \operatorname{Riem}(0) = 0, \qquad \nabla \operatorname{Riem}(0) \ne 0\,,
\end{equation}
set
\begin{flalign}
\label{ricc0}
    \operatorname{Ric}_{\alpha \, \beta} (0) &= \begin{pmatrix}
        c_1 & c_2 & c_3 \\
        c_2 & c_4 & c_5 \\
        c_3 & c_5 & c_6
    \end{pmatrix} \\
    \label{ricc1}
      \nabla_1 \operatorname{Ric}_{\alpha \, \beta} (0) &= \begin{pmatrix}
        c_7 & c_8 & c_9 \\
        c_8 & c_{10} & c_{11} \\
        c_9 & c_{11} & c_{12}
    \end{pmatrix} \\
    \label{ricc2}
     \nabla_2 \operatorname{Ric}_{\alpha \, \beta} (0) &= \begin{pmatrix}
        c_{13} & c_{14} & c_{15} \\
        c_{14} & c_{16} & c_{17} \\
        c_{15} & c_{17} & c_{18}
    \end{pmatrix} \\
    \label{ricc3}
     \nabla_3 \operatorname{Ric}_{\alpha \, \beta} (0) &= \begin{pmatrix}
        c_{19} & c_{20} & c_{21} \\
        c_{20} & c_{22} & c_{23} \\
        c_{21} & c_{23} & c_{24}
    \end{pmatrix}
\end{flalign}
and perform the computations assuming that only one set of linearly independent components at a time is different from zero. In what follows, we will provide intermediate steps for the particular case
$c_{11} = 1$ and $c_{j} = 0$ for $j \ne 11$, which corresponds to
\begin{equation}
\label{assumingRiclinear}
    \operatorname{Ric}(x) = \begin{pmatrix}
        0 & 0 & 0 \\
        0 & 0 & x^1 \\
        0 & x^1 & 0
    \end{pmatrix}\,.
\end{equation}
Observe that, owing to the Bianchi identities, the number of independent components in \eqref{ricc1}--\eqref{ricc3} is $15$.
\end{itemize}

\textbf{Simplification 3}. 

In view of positive homogeneity and of the fact that the result is rotationally invariant with respect to momentum, without loss of generality one can prove the claim for the particular choice of
\begin{equation}
\label{sim3}
\xi = \begin{pmatrix}
    0 \\ 
    0 \\
    1
\end{pmatrix} + \underbrace{\begin{pmatrix}
    \eta_1 \\
    \eta_2 \\
    \eta_3 
\end{pmatrix}}_{=: \eta},
\end{equation}
and expand the Riemannian norm of $\xi$, $\|\xi\| := \sqrt{g^{\mu \, \nu}(x) \xi_{\mu} \xi_{\nu}}$ in powers of $x$ and $\eta$. 
As soon as there are no remaining partial derivatives with respect to momentum, one can safely evaluate all quantities at $\eta = 0$. \\

\textbf{Simplification 4}.

\begin{itemize}
\item[(a)] When doing calculations in degree of homogeneity in momentum $-k$, drop all terms of order higher than $2-k$ \emph{jointly in $x$ and $\eta$}, assuming that both $x$ and $\eta$ are of the same order. 
\item[(b)] When doing calculations in degree of homogeneity in momentum $-k$, drop all terms of order higher than $3-k$ jointly in $x$ and $\eta$.
\end{itemize}

\noindent We are now ready to move on to the actual proof.

Let us begin by proving~\eqref{proposition symbol of tilde A equation 1}. To this avail, let us adopt the set of simplifications labelled by~(a). For the particular choice $c_{j}=\delta_{1j}$ in \eqref{Ric zero simplification a}, the metric tensor reads 
\begin{equation}
\label{metricsim1}
    g (x) = \begin{pmatrix}
        1 - \frac{1}{6} (x^2)^2 - \frac{1}{6} (x^3)^2 & \frac{1}{6} x^1 x^2 & \frac{1}{6} x^1 x^3 \\
        \frac{1}{6} x^1 x^2 & 1 - \frac{1}{6}(x^1)^2 + \frac{1}{6}(x^3)^2 & - \frac{1}{6} x^2 x^3\\
        \frac{1}{6} x^1 x^3 & - \frac{1}{6} x^2 x^3 & 1 - \frac{1}{6} (x^1)^2 + \frac{1}{6}(x^2)^2
    \end{pmatrix} + O(|x|^3),
\end{equation}
and
\begin{equation}
    \rho(x) = 1 - \frac{1}{6} (x^1)^2 + O(|x|^3). 
\end{equation}
Under the assumption $[p_{\pm,0}]_{-1}=0$, the algorithm from Section~\ref{Positive and negative spectral projections} yields 
\begin{equation}
\label{Xpm-1v1}
    (X_{\pm, 1})_{\prin} (x, \xi) = \begin{pmatrix}
        0 & \pm \frac{1}{48} (3i x^1 - x^2) \\
        \pm \frac{1}{48} (3i x^1 + x^2) & 0 
    \end{pmatrix} + O(|x|^2 + |\eta|^2)
\end{equation}
Now, assuming that $[p_{\pm, 1}]_{-2} = 0$ (recall Remark \ref{remark2.1}), we get 
\begin{equation}
\label{Rpmm2}
    R_{\pm, 2} = - [(P_{\pm, 1})^2]_{-2}.
\end{equation}
Given two pseudodifferential operators $A$ and $B$ with left symbols $a$ and $b$, the left symbol of their composition reads
\begin{equation}
\label{compsym}
    \sigma_{BA} \sim \sum_{k=0}^{+\infty} \frac{1}{i^k k!} \frac{\partial^k b}{\partial \xi_{\alpha_1} ... \partial \xi_{\alpha_k}} \frac{\partial^k a}{\partial x^{\alpha_1} ... \partial x^{\alpha_k}}\,.
\end{equation} 
Formulae \eqref{Rpmm2} and \eqref{compsym} imply
\begin{equation}
    R_{\pm, 2} (0, \xi_0) = \begin{pmatrix}
        - \frac{1}{48} & 0 \\
        0 & - \frac{1}{48}
    \end{pmatrix}.
\end{equation}
Furthermore, we have
\begin{flalign}
    S_{\pm, 2} (0, \xi_0) &= \begin{pmatrix}
        \mp \frac{1}{48} & 0 \\
        0 & \pm \frac{1}{48} 
    \end{pmatrix},  \\
    T_{\pm, 2} (0, \xi_0) &= \begin{pmatrix}
        0 & 0 \\
        0 & 0
    \end{pmatrix},
\end{flalign}
so that,
\begin{equation}
\label{Xpm-2v1}
    (X_{\pm, 2})_{\prin} (0, \xi_0) = \begin{pmatrix}
        \mp \frac{1}{48} & 0 \\
        0 & \pm \frac{1}{48} 
    \end{pmatrix}.
\end{equation}
Formulae \eqref{Xpm-1v1} and \eqref{Xpm-2v1} imply
\begin{equation}
    a_{-1}(0,\xi_0)=\text{tr}[(X_{+,1} - X_{-,1})_{\prin}] (0, \xi_0) = 0, \qquad a_{-2}(0,\xi_0)=\text{tr}[(X_{+,2} - X_{-,2})_{\prin}] (0, \xi_0) = 0,
\end{equation}
which concludes the proof of~\eqref{proposition symbol of tilde A equation 1}. 

\

In the same spirit as above, let us move on to the proof of \eqref{proposition symbol of tilde A equation 2}, adopting the simplifications labeled by $(b)$.

Assuming that the Ricci tensor takes the form \eqref{assumingRiclinear}, one gets
\begin{equation}
\label{metricsim}
    g (x) = \begin{pmatrix}
        1 - \frac{x^1 x^2 x^3}{3} & \frac{(x^1)^2 x^3}{6} & \frac{(x^1)^2 x^2}{6} \\
        \frac{(x^1)^2 x^3}{6} & 1 & - \frac{(x^1)^3}{6}\\
        \frac{(x^1)^2 x^2}{6} & - \frac{(x^1)^3}{6} & 1  
    \end{pmatrix} + O(|x|^4),
\end{equation}
which, in turn, yields
\begin{equation}
\label{metricinvsim}
    g^{-1} (x) = \begin{pmatrix}
        1 +\frac{x^1 x^2 x^3}{3} & -\frac{(x^1)^2 x^3}{6} & -\frac{(x^1)^2 x^2}{6} \\
        -\frac{(x^1)^2 x^3}{6} & 1 & \frac{(x^1)^3}{6}\\
        -\frac{(x^1)^2 x^2}{6} & \frac{(x^1)^3}{6} & 1  
    \end{pmatrix} + O(|x|^4)\,,
\end{equation}
\begin{equation*}
    \rho(x) = 1 - \frac{x^1 x^2 x^3}{6} + O(|x|^4)\,,
\end{equation*}
and
\begin{equation}
    \label{normsim}
    \|\xi\| = 1 + \frac{1}{2} \left(\eta_1^2 + \eta_2^2 + 2 \eta_3 -(\eta_1^2 + \eta_2^2) \eta_3\right) + O(|x|^4 + |\eta|^4)\,. 
    \qquad
\end{equation}
It is easy to see that the only non-vanishing (modulo $O(|x|^2)$) independent component of the Riemann tensor is
\begin{flalign*}
    \operatorname{Riem}_{1213} (x) &=  x^1 \,.
\end{flalign*}
Without loss of generality, we can assume $V_j{}^{\alpha} = \delta_j{}^{\alpha}$ (this can always be achieved by a rigid rotation of the coordinate system), so that the Levi-Civita framing admits the following expansion (recall~\eqref{explevicivfin}):
\begin{equation}
    \widetilde{e}_{j}{}^{\alpha}(x) = \begin{pmatrix}
        1 + \frac{x^1 x^2 x^3}{9} & - \frac{(x^1)^2 x^3}{18} & - \frac{(x^1)^2 x^2}{18} \\
        - \frac{(x^1)^2 x^3}{18} & 1 & \frac{(x^1)^3}{18} \\
        - \frac{(x^1)^2 x^2}{18} & \frac{(x^1)^3}{18} & 1 
    \end{pmatrix} + O(|x|^4)\,.
\end{equation}
The projected Pauli matrices read
\begin{flalign*}
    \widetilde{\sigma}^1(x) &= \begin{pmatrix}
        -\frac{(x^1)^2x^2}{18} & 1+ \frac{x^1x^2 x^3}{9} + i \frac{(x^1)^2 x^3}{18} \\
        1+ \frac{x^1x^2 x^3}{9} - i \frac{(x^1)^2 x^3}{18} & \frac{(x^1)^2x^2}{18}
    \end{pmatrix}+ O(|x|^4), \\
    \widetilde{\sigma}^2(x) &= \begin{pmatrix}
        \frac{(x^1)^3}{18} & - \frac{(x^1)^2 x^3}{18} - i \\
        - \frac{(x^1)^2 x^3}{18} + i & -\frac{(x^1)^3}{18}
    \end{pmatrix}+ O(|x|^4), \\
    \widetilde{\sigma}^3(x) &= \begin{pmatrix}
        1 & - \frac{(x^1)^2 x^2}{18} - i \frac{(x^1)^3}{18} \\
        - \frac{(x^1)^2 x^2}{18} + i \frac{(x^1)^3}{18} & -1
    \end{pmatrix}+ O(|x|^4).
\end{flalign*}

\noindent Thus, straightforward calculations yield
\begin{flalign*}
    P^{(\pm)}(x, \xi) = \begin{pmatrix}
      p^{(\pm)}_{11} (x, \xi) &  p^{(\pm)}_{12} (x, \xi) \\
        p^{(\pm)}_{21} (x, \xi) &  p^{(\pm)}_{22} (x, \xi)
    \end{pmatrix} +O(|x|^4+|\eta|^4),
\end{flalign*}
where 
\begin{flalign*}
      p^{(+)}_{11} (x, \xi) &= 1 + \frac{1}{4}  (2 \eta_3 -1) (\eta_1^2+ \eta_2^2), \\
       p^{(+)}_{12} (x, \xi) & = \frac{1}{36} (-9(\eta_1 - i \eta_2) - i(x^1)^3- (x^1)^2 x^2), \\
       p^{(+)}_{21} (x, \xi) &= \frac{1}{36} (i (x^1)^3 -(x^1)^2 x^2 -9 (\eta_1 + i \eta_2) (- 2 + \eta_1^2 + \eta_2^2 + 2 \eta_3 - 2\eta_3^2) ), \\
       p^{(+)}_{22} (x, \xi) &= - \frac{1}{4} (2\eta_3 - 1) (\eta_1^2 + \eta_2^2),
\end{flalign*}
and
\begin{flalign*}
      p^{(-)}_{11} (x, \eta) &= - \frac{1}{4}  (2 \eta_3 -1) (\eta_1^2+ \eta_2^2), \\
       p^{(-)}_{12} (x, \eta) & = \frac{1}{36} (9(\eta_1 - i \eta_2)(\eta_1^2 + \eta_2^2 - 2 \eta_3^2 + 2 \eta_3 - 2) + i (x^1)^3 + (x^1)^2 x^2), \\
       p^{(-)}_{21} (x, \eta) &= \frac{1}{36} (9 (\eta_1 + i \eta_2) (\eta_1^2 + \eta_2^2 + 2 \eta_3 - 2\eta_3^2 -2) -i (x^1)^3 +(x^1)^2 x^2), \\
       p^{(-)}_{22} (x, \eta) &= 1+ \frac{1}{4} (2\eta_3 - 1) (\eta_1^2 + \eta_2^2).
\end{flalign*}

% \textbf{Computation of $[\tilde{a} (x, \xi)]_{-1}$}. 
%As a preliminary step, let us devise the expansion of $W_{\text{prin}}(x, \xi)$ and $W_0(x)$ to the prescribed accuracy. We obtain}
%\begin{equation*}
%    \cancel{W_{\text{prin}}(x, \eta) = \begin{pmatrix}
%        w_{11} (x, \eta) & w_{12} (x, \eta) \\
%        w_{21} (x, \eta) & w_{22} (x, \eta)
%    \end{pmatrix},}
% \end{equation*}
%modulo terms of order $3$ jointly in $x$ and $\eta$, where
%\begin{flalign*}
%    \cancel{w_{11}(x, \eta)} & \cancel{= 1 + \frac{1}{18} (x^1)^2 (- x^2 \eta_1 + x^1 \eta_2) + \eta_3,} \\
%    \cancel{w_{12} (x, \eta)} & \cancel{= \eta_1 - i \eta_2 - \frac{1}{18} i (\eta_3 + 1) (x^1)^3 - \frac{1}{18} (x^1)^2 (\eta_3 x^2 + x^2 -i \eta_1 x^3 + \eta_2 x^3)} \\ & \cancel{+ \frac{1}{9} \eta_1 x^1 x^2 x^3,} \\
%   \cancel{w_{21} (x, \eta) }& \cancel{= \eta_1 + i \eta_2 + \frac{1}{18} i (\eta_3 + 1) (x^1)^3 - \frac{1}{18} (x^1)^2 (\eta_3 x^2 + x^2 +i \eta_1 x^3 + \eta_2 x^3)} \\ & \cancel{+ \frac{1}{9} \eta_1 x^1 x^2 x^3,} \\
%   \cancel{w_{22}(x, \eta) }& \cancel{= -\eta_3 + \frac{1}{18} (x^1)^2(\eta_1 x^2 - \eta_2 x^1) - 1},
%\end{flalign*}
%and 
%\begin{equation*}
 %  \cancel{ W_{0}(x) = \begin{pmatrix}
  %      i\frac{7}{72} x^1 x^2 & \frac{1}{72} (7x^1 - 5ix^2) x^3 \\
  %      - \frac{1}{72} (7x^1 + 5ix^2) x^3 & - i\frac{7}{72} x^1 x^2
   % \end{pmatrix},}
%\end{equation*}
%modulo terms of order $3$ jointly in $x$ and $\eta$.

We are now ready to implement the first iteration of our algorithm. 
Arguing as above, straightforward calculations give us
\begin{equation*}
    R_{\pm, 1} (x, \xi) = \begin{pmatrix}
        - \frac{1}{36} x^1 (2x^1 + i x^2) &0 \\
        0 & \frac{1}{36} x^1 (2x^1 - i x^2)
    \end{pmatrix} + O (|x|^3 + |\eta|^3),
\end{equation*}
\begin{equation*}
     S_{\pm, 1} (x, \xi) = \begin{pmatrix}
        \mp \frac{1}{36} x^1 (2x^1 + i x^2) &0 \\
        0 & \pm \frac{1}{36} x^1 (-2x^1 + i x^2)
    \end{pmatrix}  + O (|x|^3 + |\eta|^3),
\end{equation*}
and
\begin{equation*}
    T_{\pm, 1}(x, \xi) = \begin{pmatrix}
        0 & \pm \frac{1}{72} (7 x^1 - 5i x^2) x^3 \\
        \pm \frac{1}{72} (7x^1 + 5i x^2) x^3 & 0
    \end{pmatrix}  + O (|x|^3 + |\eta|^3).
\end{equation*}
Plugging the above expressions into~\eqref{Eq: Xpmk}, we get
\begin{equation*}
    (X_{\pm, 1})_{\text{prin}}(x, \xi) = \begin{pmatrix}
        \mp \frac{1}{36} x^1 (2x^1 + i x^2) & \pm \frac{1}{144} (7x^1 - 5i x^2) x^3 \\ 
        \pm \frac{1}{144} (7x^1 + 5i x^2) x^3 & \pm \frac{1}{36} x^1 (-2 x^1 + i x^2)
    \end{pmatrix}  + O (|x|^3 + |\eta|^3).
\end{equation*}
This yields
\begin{equation*}
    \text{tr}[(X_{+, 1} - X_{-, 1})_{\text{prin}}] (x, \xi) = - \frac{2}{9} (x^1)^2 + O (|x|^3+ |\eta|^3),
\end{equation*}
which vanishes at $x=z=0$.

Implementing the second iteration of the algorithm under the assumption $[p_{\pm, 1}(x, \xi)]_{-2} = 0$, we get
\begin{equation*}
    R_{\pm, 2} (x, \xi) = -[(P_{\pm, 1})^2 - P_{\pm, 1}]_{\prin, 2} = -[(P_{\pm, 1})^2]_{-2}= \begin{pmatrix}
        -\frac{i}{24} x^3 & 0 \\
        0 & \frac{i}{24} x^3
    \end{pmatrix} + O(|x|^2 + |\eta|^2),
\end{equation*}
\begin{flalign*}
S_{\pm, 2} (x, \xi) 
%&= -R_{\pm, 2}(x, \xi) + P^{(\pm)} (x, \xi) R_{\pm, 2} (x, \xi) + R_{\pm, 2} (x, \xi) P^{(\pm)} (x, \xi)= 
%S_{\pm, 2} (x, \xi) 
&= 
\begin{pmatrix}
    \mp \frac{i}{24} x^3 & 0 \\
        0 & \mp \frac{i}{24} x^3
\end{pmatrix}  + O(|x|^2 + |\eta|^2),
\end{flalign*}
% At this stage, we need to find a closed formula for 
% \begin{equation}
% \label{Tpm2}
%    T_{\pm, 2} (x, \xi) := [P_{\pm, 1}, W]_{\text{prin}, 1} (x, \xi) + [S_{\pm, 2}, W_{\text{prin}}] (x, \xi) 
% \end{equation}
% in terms of the quantities that we have already computed. 
% Let us rewrite the first commutator as 
% \begin{flalign}
%     [P_{\pm, 1}, W]_{\text{prin}, 1} (x, \xi) &= [(X_{\pm, 1})_{\text{prin}}, W_0] (x, \xi) - i \frac{\partial P^{(\pm)}(x, \xi)}{\partial \xi_{\alpha}}\frac{\partial W_0 (x, \xi)}{\partial x^{\alpha}} \\ \notag & + i \frac{\partial W_0 (x, \xi)}{\partial \xi_{\alpha}}\frac{\partial P^{(\pm)}(x, \xi)}{\partial x^{\alpha}} -i \frac{\partial (X_{\pm,1})_{\text{prin}} (x, \xi)}{\partial \xi_{\alpha}}\frac{\partial W_{\text{prin}} (x, \xi)}{\partial x^{\alpha}} \\ \notag & + i \frac{\partial W_{\text{prin}} (x, \xi)}{\partial \xi_{\alpha}} \frac{\partial (X_{\pm,1})_{\text{prin}} (x, \xi)}{\partial x^{\alpha}} \\ \notag & - \frac{1}{2} \frac{\partial^2 P^{(\pm)}(x, \xi)}{\partial \xi_{\alpha} \xi_{\beta}} \frac{\partial^2 W_{\text{prin}} (x, \xi)}{\partial x^{\alpha} x^{\beta}} + \frac{1}{2} \frac{\partial^2 W_{\text{prin}} (x, \xi)}{\partial \xi_{\alpha} \xi_{\beta}} \frac{\partial^2  P^{(\pm)} (x, \xi)}{\partial x^{\alpha} x^{\beta}}.
% \end{flalign}
% Hence, we obtain that, to the sought accuracy, 
\begin{equation*}
     T_{\pm, 2} (x, \xi) = \begin{pmatrix}
        0& \mp \frac{5}{72} i (x^1 \mp i x^2) \\
        \frac{5}{72} (\mp i x^1 + x^2) & 0
    \end{pmatrix}  + O(|x|^2 + |\eta|^2),
\end{equation*}
and thus
% \begin{equation*}
%     (X_{\pm, 2})_{\text{prin}} := S_{\pm, 2} + \frac{P^{(\pm)} T_{\pm, 2} P^{(\mp)} - P^{(\mp)} T_{\pm, 2} P^{(\pm)}}{h^{(\pm)}- h^{(\mp)}},
% \end{equation*}
% to third order jointly in $x$ and $\eta$ and we obtain 
\begin{equation}
\label{Xpm2}
     (X_{\pm, 2})_{\text{prin}} (x, \xi) = \begin{pmatrix}
         \mp \frac{i}{24} x^3 & \mp \frac{5}{144} i (x^1 \mp i x^2) \\
         \pm \frac{5}{144} i (x^1 + i x^2) &  \mp \frac{i}{24} x^3
     \end{pmatrix}  + O(|x|^2 + |\eta|^2). 
\end{equation}

\noindent Formula~\eqref{Xpm2} implies
\begin{equation*}
    \tilde{a}_{-2}(x, \xi) := \text{tr}[(X_{+, 2} - X_{-, 2})_{\text{prin}}] (x, \xi) = - \frac{i}{6} x^3 + O(|x|^2 + |\eta|^2), 
\end{equation*}
which vanishes at $x=z=0$.

Assuming $[p_{\pm, 2}(x,\xi)]_{-3} = 0$ and implementing the algorithm a third and final time, we obtain
\begin{equation*}
    R_{\pm, 3} (0, \xi_0) = \begin{pmatrix}
        -\frac{1}{48} & 0 \\
        0 & \frac{1}{48}
        \end{pmatrix},
   \qquad
S_{\pm, 3} (0, \xi_0) =
\begin{pmatrix}
     \mp \frac{1}{48} & 0 \\
        0 & \mp \frac{1}{48}  
\end{pmatrix},
\end{equation*}
\begin{equation*}
     T_{+, 3} (0, \xi_0) = \begin{pmatrix}
        0& 0 \\
       0 & 0
    \end{pmatrix}, 
    \qquad 
    T_{-,3} (0, \xi_0) = \begin{pmatrix}
        \frac{2}{9} & 0 \\
        0 & - \frac{2}{9}
    \end{pmatrix},
\end{equation*}
eventually arriving at
\begin{equation*}
     (X_{\pm, 3})_{\text{prin}} (0, \xi_0) = \begin{pmatrix}
         \mp \frac{1}{48} & 0 \\
         0 & \mp \frac{1}{48}.
     \end{pmatrix}. 
\end{equation*}
The latter finally implies
\begin{equation}
     \tilde{a}_{-3}(0, \xi_0) = \text{tr}[(X_{+, 3} - X_{-, 3})_{\text{prin}}] (0, \xi_0)= - \frac{1}{12}\,,
\end{equation}
which agrees with \eqref{proposition symbol of tilde A equation 2}.

Repeating the above arguments for each set of linearly independent components of the curvature tensor completes the proof. 
\end{proof}

\section{The regularised trace of the asymmetry operator}
\label{The regularised trace of the asymmetry operator}

In this section, relying on the analysis of the asymmetry operator performed in the previous section, we will prove our main result: Theorem~\ref{main theorem regularised local trace well defined}.

\

To this end, let us first give a proof of Proposition~\ref{prop: disc plus cont}, which translates the explicit, local expression~\eqref{proposition symbol of tilde A equation 2} for the symbol of the asymmetry operator into an explicit, local formula for the integral kernel thereof.

\begin{proof}[Proof of Proposition~\ref{prop: disc plus cont}]
Locally, the symbol $\tilde{a}$ and the integral kernel $\tilde{\mathfrak{a}}$ are related as
\begin{equation}
\label{relation symbol vs integral kernel}
    \tilde{\mathfrak{a}}(x,y)=\frac{1}{(2\pi)^3\rho(y)} \int_{\mathbb{R}^3} e^{i(x-y)^\mu \xi_\mu} \tilde{a}(x,\xi)\,\mathrm{d}\xi\,,
\end{equation}
where the identity is understood in a distributional sense. It was shown in \cite[Lemma~7.5]{curl} that
\begin{equation}
\label{fourier transform 1}
\frac{1}{(2\pi)^3}\int_{\mathbb{R}^3} \frac{\xi_\gamma\xi_\rho}{\langle \xi \rangle^5}e^{-i y^\mu \xi_\mu}\,\mathrm{d}\xi= -\frac{1}{12\pi^2}\left[2\frac{y^\mu y^\nu}{|y|^2}+(1+2\ln|y|)\delta_{\gamma\rho}\right]+ h_{\gamma\rho}(y),
\end{equation}
where $\langle \xi \rangle:=(1+|\xi|^2)^{\frac{1}{2}}$ and $h_{\gamma\rho}\in C^1(\mathbb{R}^3)$.

Now, we observe that, owing to the symmetries of the totally antisymmetric symbol, we have
\begin{equation}
\label{symmetric part of principal symbol zero}
    E^{\alpha \beta \gamma} (x) \nabla_{\alpha} \operatorname{Ric}_{\beta}{}^{\rho} (x)\delta_{\gamma\rho}=0\,.
\end{equation}
Therefore, the claim follows from Theorem~\ref{proposition symbol of tilde A}, \eqref{fourier transform 1} and~\eqref{symmetric part of principal symbol zero}, upon observing that
\begin{equation}
\frac1{|\xi|^{5}}-\frac{1}{\langle \xi \rangle^{5}}=O(|\xi|^{-7}) \quad \text{as}\quad |\xi|\to+\infty.
\end{equation}
\end{proof}

\begin{remark}
\label{remark alternative representation local invariant}
Note that our local geometric invariant can also be obtained regularising by means of `averaging over spheres'. Indeed, one has that
\begin{equation}
\label{remark alternative representation local invariant equation}
\psi_\mathrm{Dir}^\loc(x)=\lim_{r\to 0^+}\frac{1}{4\pi r^2}\int_{\mathbb{S}_r(x)} \tilde{\mathfrak{a}}(x,y)\, \mathrm{d}S_y\,,
\end{equation}
where $\mathbb{S}_r(x)$ is the geodesic sphere centred at $x$ of radius $r>0$. 
To see this, one observes that $\int_{\mathbb{S}_r(x)}(x-y)^\alpha(x-y)^\beta\, \mathrm{d}S_y=\frac{4\pi r^2}{3}\delta^{\alpha\beta}$, so that
\[
\frac{1}{4\pi r^2}\int_{\mathbb{S}_r(x)} \tilde{\mathfrak{a}}(x,y)\, \mathrm{d}S_y=
\frac{1}{4\pi r^2}\int_{\mathbb{S}_r(x)} \tilde{\mathfrak{a}}_c(x,y)\, \mathrm{d}S_y
+\frac{1}{216\pi^2} E^{\alpha\beta\rho} (x) \nabla_{\alpha} \operatorname{Ric}_{\beta\rho} (x)\,.
\]
But then the continuity of $\tilde{\mathfrak{a}}_c(x,y)$ at $y$ and~\eqref{symmetric part of principal symbol zero} imply~\eqref{remark alternative representation local invariant equation}.
\end{remark}

\begin{proof}[Proof of Theorem~\ref{main theorem regularised local trace well defined}]
The decomposition~\eqref{Eq: decomposition} does depend on the choice of local coordinates. Indeed, the issue at hand is that $(x-y)^\alpha$ is \emph{not} a vectorial quantity, namely, it does not transform as a vector under changes of local coordinates. However, the value of the continuum component along the diagonal $x=y$ is a scalar quantity, independent of the choice of local coordinates. This can be shown by arguing as in~\cite[Subsection~7.2]{curl}.
\end{proof}

\section*{Acknowledgements}
\addcontentsline{toc}{section}{Acknowledgements}

We are grateful to Dmitri Vassiliev for stimulating discussions on this and related topics.  INFN's support for MC's visit to Pavia, where part of this work has been done, is gratefully acknowledged. MC would also like to thank the Department of Mathematics of Yale University, where this manuscript was finalised during MC's stay as a Visiting Professor, for the kind hospitality.

\

\noindent \emph{Funding}: MC was supported by EPSRC Fellowship EP/X01021X/1, and is a member of the GNAMPA group of INdAM. 
The work of BC has been supported by a fellowship of the University of Pavia and BC is grateful to the Department of Mathematics at Heriot-Watt University for the kind hospitality during the realisation of part of this work. Part of this work has appeared in the Master thesis "On the Spectral Asymmetry of the Dirac Operator on Three-dimensional, Closed, Riemannian Manifolds" submitted on the 10/05/24 as a partial fulfillment of the requirements to obtain a Master degree at IUSS Pavia. 
BC and CD acknowledge the support both of the INFN Sezione di Pavia and of Gruppo Nazionale di Fisica Matematica, part of INdAM.

\vskip.2cm

\noindent\textbf{Data availability statement}. Data sharing is not applicable to this article as no new data were created or analysed in this study.

\vskip .2cm

\noindent\textbf{Conflict of interest statement.} The authors certify that they have no affiliations with or involvement in any
organization or entity with any financial interest or non-financial interest in the subject matter discussed in
this manuscript.

%\begin{appendices}

%\section{First appendix}
%\label{First appendix}

%?

%\end{appendices}

\end{document}